\newcommand{\A}{\mathcal{A}}
\newcommand{\LL}{\mathcal{L}}
\newcommand{\ZZ}{\mathbb{Z}}
\newcommand{\NN}{\mathbb{N}}
\newcommand{\cantor}{{\ensuremath{\{0,1\}^\NN}}}
\newcommand{\orbit}[1]{\ensuremath{\mathcal O\left(#1\right)}}
\newcommand{\pizu}{\ensuremath{\Pi^0_1}\xspace}
\newcommand{\turdegzero}{{\ensuremath{\mathbf{0}}}\xspace}
\newcommand{\turdeg}[1]{{\ensuremath{\mathbf{#1}}}\xspace}
\newcommand{\cone}[1]{{\ensuremath{\mathcal C_{#1}}}\xspace}
\newcommand{\spectrum}[1]{{\ensuremath{\text{Sp}\left( #1 \right)}}\xspace}
\def\presuper#1#2%
\newtheorem{theorem}{Theorem}[section]
\newtheorem{definition}{Definition}[section]
\newtheorem{lemma}[theorem]{Lemma}
\newtheorem{corollary}[theorem]{Corollary}
\begin{document}

\title[Relating word and computational complexity in subshifts]{The relationship between word complexity and computational complexity in subshifts}
\author{Ronnie Pavlov}
\address{Ronnie Pavlov\\
Department of Mathematics\\
University of Denver\\
2390 S. York St.\\
Denver, CO 80208}
\email{rpavlov@du.edu}
\urladdr{www.math.du.edu/$\sim$rpavlov/}
\thanks{The first author gratefully acknowledges the support of NSF grant
  DMS-1500685. The second author gratefully acknowledges the support of ANR grant ANR-12-BS02-0007.}
\keywords{Symbolic dynamics, Turing degree, word complexity}
\renewcommand{\subjclassname}{MSC 2010}
\subjclass[2010]{Primary: 37B10; Secondary: 03D15, 03D25}
\author{Pascal Vanier}
\email{pascal.vanier@lacl.fr}
\address{Laboratoire d'Algorithmique, Complexit\'e et Logique\\ Universit\'e de
  Paris Est, LACL, UPEC, France}

\begin{abstract}
  We prove several results about the relationship between the word complexity
	function of a subshift and the set of Turing degrees of points of the subshift,
	which we call the Turing spectrum. Among other results, we show that a Turing spectrum 
	can be realized via a subshift of linear complexity if and only if it consists of the 
	union of a finite set and a finite number of cones, that 
	a Turing spectrum can be realized via a subshift of exponential complexity 
	(i.e. positive entropy)	if and only if it contains a cone, and that every
	Turing spectrum which either contains degree $\turdegzero$ or is a union of cones
	is realizable by subshifts with a wide range of `intermediate' complexity growth 
	rates between linear and exponential. 
\end{abstract}
\maketitle

\section{Introduction}\label{S:intro}

In this work, we study various notions of complexity for subshifts. A subshift is a set $X$ of bi-infinite sequences
of symbols from a finite alphabet $\A$ which is invariant under shifts and closed in the product topology;
any such $X$ is then a dynamical system when associated with the shift map
$\sigma$.

Another, equivalent, way to define subshifts is through families of finite forbidden
words: any such set induces the subshift of sequences in which no forbidden word appears. 
Different hypotheses on the set of forbidden words yield classes of subshifts; 
for instance, subshifts of finite type (SFTs) correspond to finite sets of forbidden words, 
and effective subshifts correspond to recursively enumerable sets of forbidden words.


There are many ways to measure how ``complicated'' a
subshift is. One of the simplest is the so-called word
complexity function $(c_n(X))$, where $c_n(X)$ is the number of $n$-letter words appearing within points of $X$.
The minimum unbounded growth rate of the complexity function is linear complexity, and
the maximum growth rate is exponential growth, i.e. positive topological entropy.

A more recent measurement of complexity comes from computability theory,
specifically the notion of Turing degrees.
The Turing degree of a sequence is a measure of its computational power~\cite{Rog1987}, and the
Turing spectrum~\cite{JeandelV2013} is the set of all Turing degrees of points
in a subshift $X$.
In \cite{JeandelV2013}, the authors proved several results about the Turing
spectra of various subshifts, focusing mostly on so-called multidimensional
shifts of finite type. One of the results, however, applied to minimal subshifts in
any dimension: namely, such a subshift's spectrum either consists only of \turdegzero 
 (the degree of computable sequences) or must contain the cone above 
any of its degrees (this result is recalled in~\Cref{lem:recurrentcone}). It was later
proved that spectra of minimal subshifts may contain several cones \cite{HochmanV2016} and
then that they actually correspond exactly to the enumeration cones of co-total
enumeration degrees \cite{McCarthyMinimal}.

It is then natural to wonder about the relationship between these two notions of
complexity, especially since substantial connections between entropy and computability are 
known (see~\cite{HochMey}).

In this work,
we prove several results in this direction. Namely, under various complexity assumptions, we give
necessary conditions on the spectrum and sufficient conditions on a set of
Turing degrees for there to exist a subshift with the desired complexity which realizes that spectrum.
The somewhat surprising summary of these results is that the spectrum is heavily
restricted near minimal (i.e. linear)
and maximal (i.e. exponential) complexity, and these restrictions relax as the
complexity moves further from these extremes.

For subshifts with linear complexity, the following results show that the 
spectrum is comprised of a finite number of cones and
isolated degrees, with size bounded by the `slope' (i.e. linear growth rate). (See \Cref{S:defs} for
formal definitions.)

\begin{restatable*}{theorem}{thmStrongLinearBound}\label{thm:linearcomplexity}
  If a subshift $X$ has strong linear complexity with index $t$, 
	then $\spectrum{X} \diagdown \{\turdegzero\}$ can be written as the union of some number $c$ of
	cones and $m$ other (isolated) degrees, where $c + m < t$.
\end{restatable*}

In the opposite direction, the following theorem shows that any spectrum of this type can be realized by a subshift with linear complexity.

\begin{restatable*}{theorem}{thmRealStrongLinear}
  Let $S$ and $S'$ be  finite sets of Turing degrees. There exists a subshift
  $X$ with strong linear complexity whose spectrum is
  \[\spectrum{X}= \{\turdegzero\} \cup S \cup \bigcup_{\turdeg{d'} \in S'} \cone{\turdeg{d'}}\text{.}\]
\end{restatable*}

The other extreme for complexity is exponential growth, also called positive (topological) entropy, which also influences the
spectrum:
\begin{restatable*}{theorem}{thmPosEntropy}\label{posent}
  If a subshift $X$ has positive entropy, then its spectrum contains a cone. 
\end{restatable*}

Again, a result in the opposite direction holds: every such spectrum can be realized
by a subshift with positive entropy, which can be taken to be arbitrarily large. 
\begin{restatable*}{theorem}{thmRealPosEntropy}
For any subshift $X$ with entropy $h$, if $\spectrum{X}$ contains a cone $C$,
then there exists a subshift $P$ with $\spectrum{P} = \spectrum{X}$ and entropy
$h'$, for any $h'>h$ such that $\deg_T h'\in C$.
\end{restatable*}

Our remaining results treat complexities strictly between linear and exponential.
First, we show that subshifts with complexities arbitrarily close to linear
can realize more spectra than subshifts with linear complexity.

\begin{restatable*}{theorem}{thmRealArbSlow}
  Let $S$ be a countable set of Turing
  degrees, and let $S'$
  be a finite set of Turing degrees. Then the set of 
  subshifts $X$ with $\spectrum{X} = \{\turdegzero\} \cup S \cup \bigcup_{\turdeg{d'} \in S'} \cone{\turdeg{d'}}$ exhibits arbitrarily slow
  superlinear complexity. 

\end{restatable*}

By slightly strengthening the assumption on complexity to require that it
is ``computably close'' to linear, we can realize an extremely large class of
spectra.

\begin{restatable*}{theorem}{thmRealArbSlowComp}\label{thm:compslowsuplin}
  Let $S$ be a closed nonempty set of Turing degrees. Then the sets of 
  subshifts $X$ with $\spectrum{X} = S \cup \{\turdegzero\}$ or
  $\spectrum{X}=\bigcup_{\turdeg{d}\in S}\cone{\turdeg{d}}$ exhibit arbitrarily computably slow
  superlinear complexity.
\end{restatable*}

By a closed set of Turing degrees, we mean here that the set of Turing degrees
is realizable by a closed subset of $\cantor$. Since subshifts are closed,
the Turing spectrum of a subshift must be closed, and so some assumption of this
type on $S$ is necessary.
Closed sets of Turing degrees are of
course a strict subclass of unconstrained sets of Turing degrees: for instance, apart from the
set of all degrees and the empty set, a set and its complement cannot both be
closed sets of Turing degrees (see~\cite{LewisKent}).

Similar results hold for complexity close to exponential.

\begin{restatable*}{theorem}{thmRealArbFastComp}\label{thm:fastcompsubexp}
  Let $S$ be a closed nonempty set of Turing degrees. Then the set of subshifts with
  $\spectrum{X} = S \cup \{\turdegzero\}$ exhibits arbitrarily 
  computably fast subexponential complexity.
\end{restatable*}

\begin{restatable*}{theorem}{thmRealArbFast}\label{thm:fastsubexp}
  Let $S$ be a closed nonempty set of Turing degrees. Then the set of subshifts with
  $\spectrum{X} = \bigcup_{\turdeg{d} \in S} \cone{\turdeg{d}}$ exhibits arbitrarily 
  fast subexponential complexity.
\end{restatable*}

Finally, we show that given an intermediate growth rate, i.e. neither linear nor
exponential, we can can construct subshifts with complexity close to this growth
rate and realising these same classes of spectra.

\begin{restatable*}{theorem}{thmRealIntermediate}\label{thm:intermediate}
  Let $S$ be a closed nonempty set of Turing degrees, and let $g: \mathbb{N}
  \rightarrow \mathbb{N}$ be any computable function
	which is superlinear (i.e. $\frac{g(n)}{n} \rightarrow \infty$) and subexponential (i.e. $\frac{\log g(n)}{n} \rightarrow 0$). 
	Then there exists a subshift $X$ with $\spectrum{X} = \{\turdegzero\} \cup S$ and a sequence $(n_k)$ so that 
	\[
	\lim_{k \rightarrow \infty} \frac{c_{n_k}(X)}{g(n_k)} = 1.
	\]
\end{restatable*}

\begin{restatable*}{theorem}{thmRealIntermediat}\label{thm:intermediat}
  Let $S$ be a closed nonempty set of Turing degrees, and let $g: \mathbb{N}
  \rightarrow \mathbb{N}$ be any computable function
	which is superquadratic (i.e. $\frac{g(n)}{n^2} \rightarrow \infty$) and subexponential (i.e. $\frac{\log g(n)}{n} \rightarrow 0$). 
	Then there exists a subshift $X$ with $\spectrum{X} = \bigcup_{\turdeg{d} \in S} \cone{\turdeg{d}}$ and a sequence $(n_k)$ so that 
	\[
	\lim_{k \rightarrow \infty} \frac{c_{n_k}(X)}{g(n_k)} = 1.
	\]
\end{restatable*}

We note that all of our realization results are for Turing spectra which
either contain the degree $\turdegzero$ or are unions of cones. We also show
that there exist subshifts with spectrum outside of these two categories, 
and so our results do not realize all possible spectra for subshifts. 

\begin{restatable*}{theorem}{thmWeirdSpec}\label{thm:weirdSpec}
 Given any degree $\turdeg{d_1}\leq\turdeg 0'$ and any degree $\turdeg{d_2}$, where
 $\turdeg 0'$ is the degree of the halting problem, there exists a subshift $X$ such that
 $\spectrum{X}=\cone{\turdeg{d_1}}\cup\{\turdeg{d_2}\}$. In particular,
 $\turdeg{d_2}$ is not in $\cone{\turdeg{d_1}}$ if we take
 $\turdeg{d_2}<\turdeg{d_1}$ or $\turdeg{d_2}$ and $\turdeg{d_1}$ incomparable.
 
\end{restatable*}

We give relevant definitions in \Cref{S:defs}, then
restrictions imposed on the Turing spectrum by word complexity in
\Cref{S:restrict}, give realization results in \Cref{S:Real} and end with giving
an example of spectrum that differs from our realizations in \Cref{S:weirdSpec}.

\section{Definitions}\label{S:defs}
\subsection{Subshifts}
Given a finite set $\A$ (called the \textbf{alphabet}), a \textbf{subshift} over $\A$ is a
subset of $\A^{\ZZ}$ which is closed in the product topology and invariant under the shift map $\sigma$
defined by $(\sigma x)(n) = x(n+1)$.

For any $x \in \A^\ZZ$, the \textbf{orbit} of $x$ is the set of all its translates 
$\orbit{x} = \{\sigma^k(x)\mid k\in\ZZ\}$. 

A \textbf{word} is any element of $\A^n$ for some $n \in \NN$. For any word $w$, its \textbf{length}, written $|w|$, is the integer $n$ so that $w \in \A^n$. 
Subshifts can alternately be defined in terms of words; for any set $\mathcal{F}$ of words, one can define the subshift $X(\mathcal{F})$ given by the set of all sequences on $\A$ which do not contain any words of $\mathcal{F}$. A subshift is a \textbf{subshift of finite type (SFT)} if it is $X(\mathcal{F})$ for some finite $\mathcal{F}$.

A point $x \in \A^\ZZ$ is \textbf{periodic} if there exists $p \neq 0$ so that $x(n) = x(n + p)$ for all $n \in \ZZ$, and \textbf{aperiodic} otherwise. We say that $x$ is \textbf{eventually periodic on the left} if there exists $p \neq 0$ and $N$ so that $x(n) = x(n + p)$ for all $n < N$, and \textbf{eventually periodic  the right} if there exists $p \neq 0$ and $N$ so that $x(n) = x(n + p)$ for all $n > N$.

A point $x \in \A^\ZZ$ is \textbf{recurrent} if every word appearing within $x$ appears infinitely many times in $x$. We note that any aperiodic sequence which is eventually periodic on the left and right (e.g. $\presuper{\omega}{(01)} 2 (345)^{\omega}$ is not recurrent, since any portion which contains letters breaking the periodicity on the left and right can appear only finitely many times.

A subshift is \textbf{minimal} if it does not properly contain any nonempty subshift.

For a subshift $X$, the \textbf{language} $\LL(X)$ is the set of all words
appearing within some point in $X$. This notion can be extended to
individual points, i.e. for $x \in X$, $\LL(x)$ is the set of all words appearing in $x$.
The set of all words of $\LL(X)$ of length $n$ will be denoted $\LL_n(X)=\LL(X)\cap \A^n$.

\subsection{Complexity}

The \textbf{complexity function} $c_n(X)$ of a subshift is defined by $c_n(X) = |\LL_n(X)|$, i.e.
$c_n(X)$ is the number of all words with length $n$ appearing in some $x \in X$. 

One of the first results in symbolic dynamics is the Morse-Hedlund theorem, which states that if there exists $n$ for which $c_n(X) \leq n$, then 
$X$ is finite, and in particular all points in $X$ are periodic. Therefore, all nontrivial $X$ satisfy $c_n(X) > n$ for all $n$,
i.e. have complexity growing at least linearly.

A subshift $X$ has \textbf{strong linear complexity (with index $t$)} if there exists $t$ so that $\limsup c_n(X) - tn < \infty$,
and \textbf{weak linear complexity (with index $t$)} if there exists $t$ so that $\liminf c_n(X) - tn < \infty$. We say that
$X$ has (weak/strong) linear complexity if it has (weak/strong) linear complexity for some index $t$.

A collection $S$ of subshifts \textbf{exhibits arbitrarily slow super-linear complexity} if, for every increasing
unbounded $f: \mathbb{N} \rightarrow \mathbb{N}$, there exists $X \in S$ where $c_n(X) < nf(n)$ for sufficiently large $n$.

A collection $S$ of subshifts \textbf{exhibits arbitrarily computably slow super-linear complexity} if, for every computable increasing
unbounded $f: \mathbb{N} \rightarrow \mathbb{N}$, there exists $X \in S$ where $c_n(X) < nf(n)$ for sufficiently large $n$.

A collection $S$ of subshifts \textbf{exhibits arbitrarily fast subexponential complexity} if, for every increasing
unbounded $f: \mathbb{N} \rightarrow \mathbb{N}$, there exists $X \in S$ with $\lim \frac{\log c_n(X)}{n} = 0$, but 
$\frac{\log c_n(X)}{n} > \frac{1}{f(n)}$ for sufficiently large $n$.

A collection $S$ of subshifts \textbf{exhibits arbitrarily computably fast subexponential complexity} if, for every computable increasing
unbounded $f: \mathbb{N} \rightarrow \mathbb{N}$, there exists $X \in S$ with $\lim \frac{\log c_n(X)}{n} = 0$, but 
$\frac{\log c_n(X)}{n} > \frac{1}{f(n)}$ for sufficiently large $n$.

A subshift $X$ has \textbf{positive entropy} if $\lim_{n \rightarrow \infty} \frac{\log c_n(X)}{n} > 0$; this limit is 
called the \textbf{topological entropy} of $X$ and written $h(X)$. For more information on entropy (including a proof
of the existence of the limit), see \cite{walters}.

\subsection{Ergodic theory}\label{subsec:ergodic}

Here we summarize a few results from ergodic theory which will be used in a couple of later proofs; 
see \cite{walters} for a detailed introduction to the subject. Throughout this section,
all measures considered are Borel probability measures on $\A^{\mathbb{Z}}$. Several statements are slightly weaker
than the full strength of the theorem in question; we've stated versions suited to our setting.

\begin{theorem}[Poincar\'{e} recurrence theorem]\label{poincare}
If $X$ is a subshift and $\mu(X) = 1$, then $\mu$-a.e. $x \in X$ is recurrent.
\end{theorem}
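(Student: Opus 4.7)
The plan is to reduce the statement to the classical Poincar\'e recurrence theorem for measure-preserving transformations applied to each cylinder set separately, then use the countability of the language to conclude. I will treat $\mu$ as $\sigma$-invariant, which is the standard convention in this ergodic-theoretic setting (and without which the statement is false in general).

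First I would recall the core recurrence principle: for a measure-preserving $T$ on a probability space and any measurable $A$, $\mu$-a.e. point of $A$ returns to $A$ under forward iteration, in fact infinitely often. The standard one-line argument sets $B := A \setminus \bigcup_{n \geq 1} T^{-n}A$, observes that $\{T^{-k}B\}_{k \geq 0}$ are pairwise disjoint, so $\mu(B)=0$ by finiteness of $\mu$, and then iterates the ``returns at least once'' statement to upgrade it to ``returns infinitely often.''

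Next, for each word $w \in \LL(X)$, I apply this to the cylinder $[w] := \{x \in X : x(0)\cdots x(|w|-1) = w\}$ and to $T = \sigma$. If a point $x$ contains $w$ at only finitely many positions to the right, then shifting $x$ by its last occurrence of $w$ produces a point of $[w]$ that never returns to $[w]$ under forward iteration; this is a $\mu$-null event by the previous step. Since the possible ``last positions'' form a countable set and $\sigma$ is measure-preserving, the union of its $\sigma$-translates is also null. Running the same argument for $\sigma^{-1}$ (also measure-preserving on $X$) handles left-infinite occurrences, so the set $E_w$ of points in which $w$ appears only finitely often has $\mu(E_w) = 0$.

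Finally, a point fails to be recurrent iff some $w \in \LL(X)$ appears only finitely often in it, so the set of non-recurrent points equals $\bigcup_{w \in \LL(X)} E_w$. Because $\A$ is finite, $\LL(X)$ is countable, so this is a countable union of $\mu$-null sets and therefore null. The only real point of care is matching the bi-infinite definition of recurrence used in the paper by running the argument separately on both sides; otherwise the reasoning is a direct unpacking of the classical theorem.
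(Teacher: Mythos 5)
The paper does not prove this statement; it is listed among the ergodic-theory preliminaries with a pointer to \cite{walters}, so there is no internal proof to compare against. Your reduction to the classical Poincar\'e recurrence theorem on each cylinder $[w]$, followed by a countable union over $\LL(X)$, is the standard and correct way to obtain this version, and you are right to flag that $\sigma$-invariance of $\mu$ must be assumed (the paper leaves it implicit).

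One phrase is off as written and would fail if read literally: you claim ``the set $E_w$ of points in which $w$ appears only finitely often has $\mu(E_w)=0$.'' If $E_w$ includes points where $w$ never appears, this is false in general, since $\mu$ need not give $[w]$ positive measure for every $w\in\LL(X)$ (take $\mu$ the point mass on a fixed point and $w$ any word of $\LL(X)$ not occurring there). What your shift-to-last-occurrence argument actually proves, and what the final step actually needs, is that the set $E_w' := \{x : w \text{ appears in } x \text{ at least once but only finitely often}\}$ is $\mu$-null; correspondingly, a point fails to be recurrent exactly when some word of its own language $\LL(x)$ (not of all of $\LL(X)$) occurs only finitely often, so the non-recurrent set is $\bigcup_w E_w'$. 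With that repair the proof is complete. As a minor simplification, the separate pass with $\sigma^{-1}$ is unnecessary: if $w$ occurs in $x$ at all and only finitely many times over $\ZZ$, shifting to the rightmost occurrence already places $x$ in $\bigcup_{k\in\ZZ}\sigma^{-k}B$ where $B := [w]\setminus\bigcup_{n\geq 1}\sigma^{-n}[w]$, so one direction of the argument suffices.
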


\begin{definition}
A measure $\mu$ is \textbf{ergodic} if every measurable $A$ for which $A = \sigma A$ satisfies $\mu(A) \in \{0,1\}$. 
\end{definition}

\begin{theorem}[Birkhoff pointwise ergodic theorem]\label{birkhoff}
If $\mu$ is an ergodic measure and $w$ is a word, then for $\mu$-a.e. $x$,
\[
\lim_{n \rightarrow \infty} \frac{|\{0 \leq i < n \ : \ \sigma^i x \in [w]\}|}{n} \rightarrow \mu([w]),
\]
where $[w]$ is the set of $x$ containing an occurrence of $w$ starting at the $0$th coordinate.
\end{theorem}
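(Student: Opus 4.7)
The plan is to prove the Birkhoff pointwise ergodic theorem for the bounded measurable function $f := \mathbf{1}_{[w]}$, whose integral satisfies $\int f \, d\mu = \mu([w])$; this is exactly the target limit. The strategy is the classical one: first reduce the problem to identifying a constant via ergodicity, then pin down that constant using the maximal ergodic theorem.

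First I would introduce the ergodic averages $A_n f(x) := \frac{1}{n} \sum_{i=0}^{n-1} f(\sigma^i x)$, which coincide with $\frac{|\{0 \leq i < n \,:\, \sigma^i x \in [w]\}|}{n}$, together with their upper and lower limits $\bar{f}(x) := \limsup_n A_n f(x)$ and $\underline{f}(x) := \liminf_n A_n f(x)$. A direct computation gives $A_n f(\sigma x) - A_n f(x) = \frac{f(\sigma^n x) - f(x)}{n}$, which tends to $0$ because $f$ is bounded; hence $\bar{f}$ and $\underline{f}$ are $\sigma$-invariant measurable functions. Ergodicity then forces each to be $\mu$-a.e.\ equal to a constant, say $\bar{c}$ and $\underline{c}$, and the inequality $\underline{c} \leq \bar{c}$ is automatic. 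The problem reduces to showing $\underline{c} = \bar{c} = \mu([w])$.

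Next I would apply the maximal ergodic theorem: for any $g \in L^1(\mu)$, the set $E_\alpha := \{x \,:\, \sup_{n \geq 1} A_n g(x) > \alpha\}$ satisfies $\int_{E_\alpha} g \, d\mu \geq \alpha\, \mu(E_\alpha)$. Taking $g = f$ and any $\alpha > \mu([w])$: if $\bar{c} > \alpha$, then $\{\bar{f} > \alpha\} \subseteq E_\alpha$ has full measure, yielding $\alpha \leq \int_{E_\alpha} f \, d\mu \leq \int f \, d\mu = \mu([w]) < \alpha$, a contradiction. Hence $\bar{c} \leq \alpha$ for every $\alpha > \mu([w])$, so $\bar{c} \leq \mu([w])$. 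A symmetric argument applied to $-f$ (or to $\inf_n A_n f$ with $\alpha < \mu([w])$) yields $\underline{c} \geq \mu([w])$, and combining the two bounds gives the claimed almost-everywhere convergence.

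The main obstacle is the maximal ergodic theorem itself, which carries the genuine analytic content; everything around it is bookkeeping with invariance and ergodicity. The cleanest route I would take is Garsia's combinatorial argument: fix $N$ and $g \in L^1(\mu)$, let $S_k g := \sum_{i=0}^{k-1} g \circ \sigma^i$, and set $M_N(x) := \max_{1 \leq k \leq N} S_k g(x)$. One shows directly, by partitioning each orbit into blocks on which the running maximum is achieved and telescoping the partial sums, that $\int_{\{M_N > 0\}} g \, d\mu \geq 0$; passing $N \to \infty$ and substituting $g - \alpha$ for $g$ recovers the form stated above. This combinatorial lemma is the only nonroutine step, and once it is in place the rest of the argument goes through as sketched.
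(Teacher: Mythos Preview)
Your argument is the standard route to Birkhoff's theorem via the maximal ergodic theorem, and as sketched it is correct: the invariance of $\bar f$ and $\underline f$, the reduction by ergodicity to constants, and the two-sided squeeze using the maximal inequality (with Garsia's combinatorial lemma as the engine) all go through for the bounded indicator $f = \mathbf{1}_{[w]}$.

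That said, there is nothing to compare against: the paper does not prove this statement. It is listed in Section~\ref{subsec:ergodic} among several background results from ergodic theory that are simply quoted, with a reference to Walters for details. So your proposal supplies a proof where the paper deliberately omits one; it is correct, but the paper's ``approach'' is just to cite the literature.
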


\begin{lemma}\label{periodic}
If $X$ is a subshift with positive topological entropy, then there exists an ergodic measure $\mu$ where $\mu(X) = 1$ and
$\mu$-a.e. $x \in X$ is not periodic.
\end{lemma}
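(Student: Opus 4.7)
The plan is to combine the variational principle with the ergodic decomposition to extract an ergodic measure of positive entropy, and then show that any such measure must avoid the (countable) set of periodic points. The hard part is really just picking the correct facts off the shelf; the content is a short assembly.

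First I would invoke the variational principle, available since $X$ is a compact metric space and $\sigma$ is a homeomorphism: $h(X) = \sup_{\nu} h_{\nu}(\sigma)$, where the supremum runs over $\sigma$-invariant Borel probability measures supported on $X$. Because $h(X) > 0$, this immediately yields some invariant $\nu$ with $\nu(X) = 1$ and $h_{\nu}(\sigma) > 0$.

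Next I would apply the ergodic decomposition to $\nu$, writing it as an integral $\nu = \int \mu_{\omega}\, d\nu(\omega)$ of ergodic measures $\mu_{\omega}$, each supported on $X$. The measure-theoretic entropy is affine on invariant measures, so $h_{\nu}(\sigma) = \int h_{\mu_{\omega}}(\sigma)\, d\nu(\omega)$, and positivity of the left-hand side forces at least one component $\mu$ to be ergodic with $\mu(X) = 1$ and $h_{\mu}(\sigma) > 0$.

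Finally, I would check that $\mu$ charges no periodic point. The set of periodic sequences in $\A^{\ZZ}$ is countable, since for each $p \geq 1$ there are only $|\A|^p$ points of period $p$, so it suffices to show $\mu$ is non-atomic. If $\mu(\{x\}) > 0$ for some $x$, then $\sigma$-invariance gives $\mu(\{\sigma^k x\}) = \mu(\{x\})$ for every $k \in \ZZ$, and since $\mu$ is a probability measure the orbit $\orbit{x}$ must be finite; hence $x$ is periodic and ergodicity forces $\mu$ to be the normalized counting measure on $\orbit{x}$. But that measure has entropy zero (its state partition into singletons is finite and $\sigma$-invariant), contradicting $h_{\mu}(\sigma) > 0$. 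So $\mu$ is non-atomic, the countable set of periodic points has $\mu$-measure zero, and $\mu$-a.e. $x \in X$ is not periodic, as required.

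The only mildly delicate point is that one needs the ergodic components to remain supported on $X$, which holds because the decomposition is almost-everywhere concentrated on $\mathrm{supp}\,\nu \subseteq X$; everything else is a direct quotation of standard results from \cite{walters}.
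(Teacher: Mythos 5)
Your proof is correct and is essentially the paper's argument run forward rather than by contrapositive: both hinge on the variational principle, the ergodic decomposition together with affinity of measure-theoretic entropy, and the observation that an ergodic measure concentrated on a periodic orbit has zero entropy. The only cosmetic difference is that you deduce $\mu(P)=0$ via non-atomicity plus countability of the periodic points, while the paper uses $\sigma$-invariance of $P$ directly to get $\mu(P)\in\{0,1\}$; both are fine.
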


\begin{proof}
We prove the contrapositive, and assume that all ergodic $\mu$ with $\mu(X) = 1$ satisfy $\mu(P) > 0$, where $P$ is the set of periodic points of $X$. For every ergodic $\mu$, since $P = \sigma P$ and $\mu(P) > 0$, in fact $\mu(P) = 1$. It is easy to see from the definition that any such $\mu$ has measure-theoretic entropy $h(\mu) = 0$ (the definition of measure-theoretic entropy is long, and so we do not include it here.) By the so-called ergodic decomposition, every measure $\mu$ with $\mu(X) = 1$ can be written as a convex combination of ergodic measures, and since measure-theoretic entropy is linear, $h(\mu) = 0$ for all such $\mu$. Finally, the Variational Principle (\cite{walters}) states that the topological entropy $h(X)$ is the supremum of the measure-theoretic entropy $h(\mu)$ over all $\mu$ with $\mu(X) = 1$, and so $h(X) = 0$, completing the proof.
\end{proof}

\subsection{Computability}
A set $S\subseteq\{0,1\}^\NN$ is called \textbf{effectively closed} or a \textbf{\pizu
class} iff there exists a Turing
machine $M$ such that the sequences of $S$ are exactly the ones on which $M$
does not halt. Equivalently, there is a Turing machine that enumerates the
prefixes of all sequences not in $S$.

An effectively closed subshift, or \textbf{effective subshift} is thus a subshift 
$X(\mathcal{F})$ for some recursively enumerable set $\mathcal{F}$ of forbidden words. 
SFTs are effectively closed, but not all effectively closed subshifts are SFTs.

Define the partial order $\leq_T$ on $\{0,1\}^\NN$ as follows: for $x,y\in\{0,1\}^\NN$, $x\leq_T y$ 
if there exists a Turing machine which, given $y$ as an oracle, outputs $x$. We say that $x\equiv_T y$
whenever $x\leq_T y$ and $y\leq_T x$, and since $\leq_T$ is clearly reflexive and transitive, 
$\equiv_T$ is an equivalence relation.
A \textbf{Turing degree} is an equivalence class for this relation. Given a
sequence $x\in\{0,1\}^\NN$, we denote its Turing degree by $\deg_T x$. The computable
sequences all share the lowest degree, denoted by \turdegzero.
For any degree $\turdeg{d}$, the \textbf{cone with \textbf{base} \turdeg{d}} is the 
set $\left\{ \turdeg e \mid \turdeg d\leq_T \turdeg e \right\}$ of all degrees above $\turdeg{d}$, 
and is denoted by $\cone{d}$.
We will often write $\cone{x}$ instead of the more formally correct $\cone{\deg_T x}$.

The \textbf{Turing spectrum} of a subshift $X$, denoted by \spectrum{X}, is the set 
$\{\turdeg d\mid \exists x\in X,\deg_T(x)=\turdeg d\}$
of all Turing degrees of all points of $X$.

The following lemma describes a simple hypothesis guaranteeing that $\spectrum{X}$ contains a cone. 
It is essentially the same as Theorem 5.3 from \cite{JeandelV2013}, and we present 
an abbreviated proof describing why the hypothesis of minimality of $X$ can be replaced by aperiodicity 
and recurrence of $x$.

\begin{lemma}[\cite{JeandelV2013}]\label{lem:recurrentcone} 
  For any subshift $X$ and any aperiodic recurrent point $x\in X$, 
	$\cone{x}\subseteq \spectrum{X}$.
\end{lemma}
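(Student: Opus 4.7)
The plan is to fix any $A \subseteq \NN$ with $A \geq_T x$, build a point $y \in X$ of Turing degree $\deg_T A$, and note that ranging over all such $A$ then yields $\cone{x} \subseteq \spectrum{X}$. The construction pins down the left half of $y$ once and for all by setting $y \upharpoonright (-\infty, 0) := x \upharpoonright (-\infty, 0)$, so that $x \upharpoonright (-\infty, 0)$ is recoverable from $y$ alone, and builds the right half $y \upharpoonright [0, \infty)$ stage by stage in overlapping patches whose choices encode the bits of $A$.

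At stage $k$, with $y$ defined on $(-\infty, n_k)$, take the suffix $w_k := y \upharpoonright [n_k - M_k, n_k)$ for a suitably large window $M_k$. By recurrence of $x$, the word $w_k$ has infinitely many occurrences in $x$, and by aperiodicity of $x$ any two distinct occurrences must eventually disagree in their forward continuations. The next step is to select two such occurrences lying inside $x \upharpoonright (-\infty, 0)$ (recurrence guarantees infinitely many for $k$ large enough), examine their continuations out to the first position at which they disagree, and extend $y$ by the continuation indexed by the bit $A(k)$, setting $n_{k+1}$ accordingly.

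Verification splits into three parts. First, $y \in X$: each patch is by construction a factor of $x$, hence an element of $\LL(X)$; letting $M_k$ grow sufficiently fast relative to the patch lengths forces every fixed-length factor of $y$ to be, for all but finitely many positions, contained inside a single patch, so $\LL(y) \subseteq \LL(X)$ and hence $y \in X$. Second, $y \leq_T A$: the construction is uniformly computable from the pair $(A, x)$, and $x \leq_T A$ by hypothesis. Third, $A \leq_T y$: given $y$, one reads off $x \upharpoonright (-\infty, 0) = y \upharpoonright (-\infty, 0)$, then for each $k$ computes $w_k$ from $y$, locates its two designated occurrences inside that left tail, computes the two candidate continuations, compares them to the already-built $y \upharpoonright [n_k, n_{k+1})$, and recovers $A(k)$. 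Together these give $\deg_T y = \deg_T A$.

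The main obstacle is the first verification step, namely controlling $\LL(y)$: patches overlap by $M_k$ symbols, but a factor of $y$ straddling a patch boundary need not be a factor of $x$, so one must choose $M_k$ large enough relative to the cumulative patch lengths that every $\ell$-length factor of $y$ is, from some stage onward, contained in a single patch. This is exactly the bookkeeping carried out in Theorem~5.3 of \cite{JeandelV2013}; the hypothesis of minimality there is used only to ensure that every point of $X$ is recurrent, and so under the weaker hypothesis that the particular point $x$ is recurrent and aperiodic the same argument applies essentially unchanged.
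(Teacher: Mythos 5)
Your proposal shares the paper's core idea (use recurrence to find repeated occurrences and aperiodicity to find disagreeing extensions, then encode bits of $A$ by choosing between the two), but the specific architecture you adopt — freeze $y \upharpoonright (-\infty, 0) := x \upharpoonright (-\infty, 0)$ and graft a new right tail onto it — departs from the paper's construction and introduces gaps that your verification sketch does not close. The most basic one is at the seam: for $y$ to lie in $X$, \emph{every} factor of $y$ must be in $\LL(X)$, not merely all but finitely many. A factor $y \upharpoonright [a, b)$ with $a < -M_0$ and $b \geq 0$ is $x \upharpoonright [a, 0)$ followed by the forward continuation of some \emph{other} occurrence of $w_0$ elsewhere in $x$; there is no reason this concatenation ever occurs in $x$, because you matched only the last $M_0$ letters before the splice, not everything to its left. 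The paper avoids this by building $\enc(y)$ as a limit of a nested sequence $w_0 \subset w_1 \subset \cdots$ of finite subwords of $x$, each sitting in the center of the next: every factor of the limit is a factor of some $w_k$ and hence of $x$, with no seam at all. The bookkeeping you cite from Theorem~5.3 of \cite{JeandelV2013} is precisely for that nested construction and does not rescue the grafting construction.

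There are two further problems. Your claim that aperiodicity alone guarantees that any two occurrences of $w_k$ have eventually disagreeing forward continuations is false: two occurrences at $i_1 < i_2$ with identical continuations from $i_1 + |w_k|$ on only say that $x$ is eventually periodic to the right with period $i_2 - i_1$, which is compatible with $x$ being aperiodic and recurrent. The paper has to observe that an aperiodic recurrent point cannot be eventually periodic on both sides, and then reduce by reflection, WLOG, to the case where $x$ is not eventually periodic on the \emph{right}; that, not aperiodicity, is the hypothesis invoked in Lemma~5.2 of \cite{JeandelV2013}. Your proof omits this reduction. Finally, the decoding step $A \leq_T y$ is not secured: the decoder has only $y \upharpoonright (-\infty, 0) = x \upharpoonright (-\infty, 0)$, so if the two candidate continuations first disagree at a coordinate $\geq 0$ of $x$, the decoder cannot compute the alternative continuation and hence cannot tell which bit was encoded. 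The paper sidesteps this by building $\enc(y)$ so that each $w_k$ occurs at least twice inside $\enc(y)$ itself, with the differing followers both present in $\enc(y)$, so decoding is intrinsic to the constructed point and never refers back to $x$.
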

\begin{proof}
  \newcommand{\enc}{\ensuremath{\mathbf{enc}}\xspace}
  \newcommand{\dec}{\ensuremath{\mathbf{dec}}\xspace}
  
Since $x$ is aperiodic and recurrent, it cannot be both eventually periodic on the left and right. Since 
  reflecting a sequence about the origin does not affect its Turing degree, we may assume without loss
  of generality that $x$ is not eventually periodic on the right.
    
  We construct two functions $\enc: \cantor \to X$ and
  $\dec: \enc(\cantor) \to \cantor$ such that $\enc(y)$ is an ``encoding'' of $y$
  derived from $x$ and $\dec$ is a ``decoding,'' with the property that $\dec(\enc(y)) = y$ for all $y$.

  Let $\prec$ be a fixed order on $\Sigma$ the alphabet of $X$. We compute $\enc(y)$
  in an inductive way:
  \begin{itemize}
  \item We start with $w_{0}=x_0$ the letter in the center of $x$.
  \item For each $k \geq 0$, the word $w_k$ will be a subword of $x$. 
	Then, by Lemma 5.2 from \cite{JeandelV2013},
	there exist subwords $w_{k,0}$ and $w_{k,1}$ of $x$ with the following properties:
	$w_k$ occurs exactly twice within each $w_{k,i}$, the first differing letters $a,b$ following the
	left/right occurrences of $w_k$ in $w_{k,0}$ satisfy $a \prec b$, and the corresponding letters
	$e,f$ for $w_{k,1}$ satisfy $f \prec e$. (The proof of Lemma 5.2 was stated for minimal $X$,
	but clearly only used the facts that $x$ is recurrent and not eventually periodic on the right.)

	
    Depending on whether $y_k = 0$ or $1$, consider the 
		word $w_{k, y_k}$, and take the occurrence of this word within $x$ closest to the origin.
    Let $i\in \ZZ$ be the position within $x$ of the first letter of the first $w$ within this
		$w_{k, y_k}$, and let $j$ be the position within $x$ of $b$ or $f$, i.e. the rightmost of the
		first differing letters following $w_k$ within $w_{k, y_k}$. Finally,
    define $w_{k+1} = x_{[i-(j-i);j]}$.
  \end{itemize}
  As $w_k$ is a subword in the center of $w_{k+1}$, the sequence
  $(w_k)_{k\in\NN}$ converges, and the sequence to which it converges is
  $\enc(y) \in X$.

  The decoding $\dec$ is now straightforward: for any $z \in \enc(\cantor)$, 
  we start by looking at the letter in the center, which recovers $w_0$ from
	the definition of $\enc$. We then look for its next 
  occurrence to the right, and we then 
  scan the letters following both until we find a difference (this process succeeds
	by definition of $\enc$.) Once one 
  is found, we check which case it is and recover $y_0$. This allowed us to recover $w_1$, and so
  we now do the same procedure again starting with $w_1$ and so on.
  
  Clearly $\dec$ is computable, and $\enc$ is computable given $x$. Therefore, for
	any $y \in \cantor$, $y\leq_T \enc(y)\leq_T \sup(x,y)$. This means that if $y \geq_T x$, 
	then $\deg_T\enc(y)=\deg_T y$. Since $\enc(y) \in X$ and $y \geq_T x$ was 
  arbitrary, this shows that the spectrum of $X$ contains the cone above $\deg_T x$.
\end{proof}

The following corollary is nearly immediate.

\begin{corollary}\label{lem:coneor0} 
  For any subshift $X$, $\spectrum{X}$ contains either a cone or $0$.
\end{corollary}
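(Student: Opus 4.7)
The plan is to do a case split based on whether $X$ contains an aperiodic recurrent point. If such a point $x$ exists, then \Cref{lem:recurrentcone} immediately gives $\cone{x} \subseteq \spectrum{X}$, so $\spectrum{X}$ contains a cone and we are done.

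The other case is when every recurrent point of $X$ is periodic. First I would argue that $X$ (assumed non-empty) does contain a recurrent point at all: by Zorn's lemma applied to the collection of non-empty closed shift-invariant subsets of $X$ ordered by reverse inclusion, $X$ contains a non-empty minimal subshift $Y$, and every point of a minimal subshift is (uniformly) recurrent. Pick any $y \in Y \subseteq X$; by hypothesis $y$ is periodic. But a periodic sequence is determined by a single period block, so it is computable, i.e.\ $\deg_T y = \turdegzero$. Hence $\turdegzero \in \spectrum{X}$.

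The two cases are exhaustive, so in either situation $\spectrum{X}$ contains a cone or contains $\turdegzero$. The only mild subtlety is producing a recurrent point at all, which is handled by the standard Zorn-type argument for the existence of a minimal subshift inside any non-empty subshift; everything else is a direct appeal to the preceding lemma and to the obvious computability of periodic points. There is no real obstacle here — the corollary is essentially a packaging of \Cref{lem:recurrentcone} together with the trivial observation that the only alternative to having an aperiodic recurrent point is having a periodic one, whose degree is $\turdegzero$.
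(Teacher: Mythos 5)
Your proof is correct, and its overall shape is the same as the paper's: produce a recurrent point and observe that it is either periodic (hence of degree $\turdegzero$) or aperiodic (hence, by \Cref{lem:recurrentcone}, the base of a cone contained in $\spectrum{X}$). The one substantive difference is how you produce a recurrent point. The paper invokes the Poincar\'e recurrence theorem (\Cref{poincare}), which implicitly relies on the existence of a shift-invariant Borel probability measure on $X$ (Krylov--Bogolyubov), whereas you use the Zorn's-lemma argument that every nonempty subshift contains a minimal subsystem, all of whose points are (uniformly) recurrent. Your route is more purely topological and avoids measure theory entirely, which some would consider more elementary; the paper's route fits naturally with the ergodic-theoretic machinery it has already set up in the preliminaries (and reuses again for the positive-entropy results). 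Both arguments silently assume $X \neq \emptyset$, which is needed for the statement to be true at all; this is a standard convention but worth being aware of.
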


\begin{proof}
  By \Cref{poincare}, $X$ contains a recurrent point $x$. If $x$ is periodic, then obviously
  $\deg_T(x) = \turdegzero \in \spectrum{X}$. If $x$ is aperiodic, then by Lemma~\ref{lem:recurrentcone}, 
  $\cone{x} \subseteq \spectrum{X}$.
\end{proof}

\subsection{Sturmian subshifts}\label{S:Sturmians}

A particular class of subshifts which will be relevant for several of our results are the so-called Sturmian subshifts. 
For any $\alpha \in (0,1) \cap \mathbb{Q}^c$, the \textbf{Sturmian subshift with rotation number $\alpha$} is defined as follows:
\[
  S_{\alpha} := \overline{\{ \left(\lfloor (n+1)\alpha + c \rfloor - \lfloor n\alpha + c \rfloor \right)_{n \in \mathbb{Z}} \ : \ 
    c \in \mathbb{R}\}}.
\]
The relevant properties of Sturmian subshifts for our purposes are the following.
\begin{itemize} 
\item $c_n(S_{\alpha}) = n+1$ for all $n \in \mathbb{N}$. 
\item Every $S_{\alpha}$ is minimal.
\item For every $n$ and every $w \in \LL_n(S_{\alpha})$, the number of $1$s in $w$ is either $\lfloor n\alpha \rfloor$ or $\lfloor n\alpha \rfloor + 1$.
\end{itemize}

The third property above implies that $\spectrum{S_{\alpha}} \subset \cone{\deg_T \alpha}$; for any $x \in S_\alpha$,
the frequency of $1$s in the first $n$ letters of $x$ is within $\frac{1}{n}$ of $\alpha$, and 
so $\deg_T(x) \geq_T \deg_T(\alpha)$. Conversely, the point in $S_\alpha$ corresponding to $c = 0$ in the definition
above clearly has degree $\deg_T \alpha$, and so by Lemma~\ref{lem:recurrentcone},
$\spectrum{S_{\alpha}} \supset \cone{\deg_T \alpha}$. We conclude that $\spectrum{S_{\alpha}} = \cone{\deg_T \alpha}$ for any Sturmian subshift $S_\alpha$.

\section{Restrictions imposed by complexity on the (Turing) spectrum}\label{S:restrict}

In this section, we prove several results about restrictions on $\spectrum{X}$ given knowledge about the growth rate of the word complexity function $c_n(X)$. Somewhat unsurprisingly, the largest restriction is imposed when $X$ has linear complexity, the slowest possible (nontrivial) growth rate.

\subsection{Implications of linear complexity}\label{subsec:linearbound}

When $c_n(X)$ grows as slowly as possible, namely like $n$ plus a constant, $\spectrum{X}$ is extremely restricted.

\begin{theorem}
  If $X$ is a subshift and $\limsup c_n(X) - n < \infty$, then \spectrum{X} is
  equal to either the singleton $\turdegzero$, a single cone, or the union of
  $\{\turdegzero\}$ and a single cone.
\end{theorem}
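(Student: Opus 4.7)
The plan is to apply \Cref{thm:linearcomplexity} to reduce $\spectrum{X}$ to a short list of possibilities and then eliminate the one bad case by a structural argument. Since $\limsup c_n(X)-n<\infty$ certainly implies $\limsup\,(c_n(X)-2n)=-\infty$, the subshift $X$ has strong linear complexity with index $t=2$, so \Cref{thm:linearcomplexity} yields $c+m<2$: that is, $\spectrum{X}\setminus\{\turdegzero\}$ is either empty, a single cone, or a single isolated degree. Combined with \Cref{lem:coneor0}, which forces the spectrum to contain $\turdegzero$ or a cone, the possibilities for $\spectrum{X}$ reduce to $\{\turdegzero\}$, $\cone{\turdeg{d}}$, $\{\turdegzero\}\cup\cone{\turdeg{d}}$, or $\{\turdegzero,\turdeg{e}\}$ with $\turdeg{e}>\turdegzero$ isolated. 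The first three are the forms in the statement, so the entire content of the proof lies in ruling out the fourth.

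Suppose for contradiction $\spectrum{X}=\{\turdegzero,\turdeg{e}\}$ with $\turdeg{e}>\turdegzero$, and pick $x\in X$ with $\deg_T(x)=\turdeg{e}$. Since periodic points are computable, $x$ is aperiodic; and \Cref{lem:recurrentcone} prevents $x$ from being recurrent, since otherwise $\cone{\turdeg{e}}\subseteq\spectrum{X}$, which is impossible in a two-element spectrum. So $x$ is aperiodic and non-recurrent, and some word $u$ occurs only finitely many times in $x$; the task is to derive a contradiction by showing such an $x$ must actually be computable.

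For this I would exploit the rigidity imposed by $\limsup c_n(X)-n<\infty$. Writing $\Delta c_m=c_{m+1}(X)-c_m(X)$, Morse--Hedlund gives $\Delta c_m\geq 1$ for all $m$, while the bound $c_n\leq n+K$ gives $\sum_m(\Delta c_m-1)<\infty$, so $\Delta c_m=1$ for all but finitely many $m$. Equivalently, for all large $m$ there is a unique right-special factor $r_m\in\LL_m(X)$ with exactly two right-extensions, and symmetrically a unique left-special factor. Applying \Cref{lem:recurrentcone} inside $\overline{\orbit{x}}\subseteq X$ forces every recurrent point of $\overline{\orbit{x}}$ to be periodic (else a cone violating our spectrum). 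Combining this with the single-right-special-word rigidity, I would show that the set of subsequential limits of $\sigma^n x$ as $n\to+\infty$ collapses to a single periodic orbit, so that $\sigma^n x$ converges to it and $x$ is eventually periodic on the right; a symmetric argument handles the left, so $x$ is eventually periodic on both sides and hence computable, contradicting $\turdeg{e}>\turdegzero$.

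The main obstacle is the last combinatorial step: arguing rigorously that the one-sided $\omega$-limit is a \emph{single} periodic orbit rather than several periodic orbits connected by aperiodic transitional points. I expect to tackle this either by directly bounding the number of distinct ``transition words'' between two periodic orbits that can appear in $\LL_n(X)$ and showing that more than one such transition would push $c_n(X)$ past $n+O(1)$, or by invoking a known structural description of subshifts with $c_n=n+O(1)$ (the quasi-Sturmian class), from which the single-orbit conclusion becomes immediate.
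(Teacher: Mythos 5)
Your proposal takes a genuinely different route from the paper's proof, but leaves a real gap at its final and crucial step.

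The paper dispatches this theorem in a few lines by invoking Coven's classification: any sequence $x$ with $\limsup c_n(x) - n < \infty$ is periodic, or aperiodic but eventually periodic in both directions (hence of degree $\turdegzero$), or quasi-Sturmian (hence aperiodic, recurrent, and with orbit closure a minimal quasi-Sturmian system). From there, the argument is immediate: $X$ cannot contain two distinct quasi-Sturmian subsystems (two disjoint minimal systems would force $c_n(X) \geq 2n$), the first two types contribute only $\turdegzero$, and a quasi-Sturmian system is the image of a Sturmian one under a computable shift-commuting homeomorphism, so its spectrum is a single cone.

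Your route instead applies \Cref{thm:linearcomplexity} with $t=2$ and \Cref{lem:coneor0} to reduce $\spectrum{X}$ to four possibilities, three of which are the desired ones, and then tries to eliminate $\{\turdegzero,\turdeg{e}\}$ with $\turdeg{e}$ isolated. This reduction is correct, and you correctly identify that the content now lies in showing that an aperiodic non-recurrent point $x$ with $\limsup c_n(x)-n<\infty$ is eventually periodic on both sides. But you do not prove this; you explicitly flag it as ``the main obstacle'' and sketch two possible attacks without completing either. The right-special-factor bound $\Delta c_m = 1$ eventually is a sound starting observation, but the passage from ``a unique right-special factor of each large length'' to ``the forward $\omega$-limit set is a single periodic orbit to which $\sigma^n x$ converges'' is exactly the nontrivial step, and you do not supply it. As written, the proposal is incomplete.

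It is worth noting that your second suggested fix --- citing the quasi-Sturmian classification for $c_n = n + O(1)$ --- is precisely Coven's theorem, the paper's starting point. If you invoke it, your gap closes, but the detour through \Cref{thm:linearcomplexity} becomes unnecessary: Coven's theorem alone (together with the disjointness argument for two minimal subsystems and the fact that the relevant homeomorphism is computable) already gives the full statement directly, with no need to first classify the spectrum into cones and isolated degrees and then separately rule out the isolated case.
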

\begin{proof}

By \cite{coven}, if $\limsup c_n(x) - n < \infty$, then $x$ is either periodic, eventually periodic in both directions but aperiodic (e.g. $\ldots 00012323 \ldots$), or quasi-Sturmian, i.e. the image of a Sturmian sequence under a shift-commuting homeomorphism. In this third case, $x$ is aperiodic and recurrent (and in fact $X$ is minimal). 

Consider any $X$ as in the theorem. We claim that $X$ cannot contain two subsystems of the third (i.e. quasi-Sturmian) type. Indeed, unequal minimal systems are disjoint, and so if $X$ contained two quasi-Sturmian systems $Y, Z$, they would be disjoint. Then for large enough $n$, $\LL_n(Y)$ and $\LL_n(Z)$ would be disjoint, implying $c_n(X) \geq 2n$ for large enough $n$, contradicting the original assumption on $X$. Clearly, all $x$ of the first two types (periodic or eventually periodic in both directions) yield only points in $X$ with degree $\turdeg{0}$. It therefore remains only to show that a quasi-Sturmian system has spectrum equal to a single cone. This follows immediately from the fact that the spectrum of a Sturmian system is a single cone, and the fact that a shift-commuting homeomorphism is a computable map.
   

\end{proof}

For subshifts of linear complexity, the spectrum is still heavily constrained by the growth rate of $c_n(X)$.

\thmStrongLinearBound
The proof of \Cref{thm:linearcomplexity} will be divided into two parts. In
\Cref{lem:linearfinitelonedegrees}, we prove that $\spectrum{X}$ 
contains some number $m < t$ of nonzero isolated degrees whose associated cones
are not subsets of $\spectrum{X}$.

Then, 
we prove that $\spectrum{X}$ is the union of these isolated 
degrees and some number $c < t - m$ of cones. Interestingly, most portions of this proof
require only the weaker assumption of weak linear complexity.

\subsubsection{Linear complexity and number of isolated degrees}

From \Cref{lem:recurrentcone}, we know that isolated degrees in the spectrum of
a subshift, degrees that are not part of a cone, can only come from
non-recurrent points.

\begin{lemma}\label{lem:linearfinitelonedegrees}
 If a subshift $X$ has weak linear complexity with index $t$, 
	then there are fewer than $t$ different orbits of non-recurrent points in $X$.
\end{lemma}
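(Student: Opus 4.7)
The plan is to show that each orbit of a non-recurrent point contributes at least $n - O(1)$ distinct length-$n$ words to $\LL_n(X)$, and that these contributions are nearly disjoint across different orbits. Thus $k$ non-recurrent orbits would force $c_n(X) \geq kn - O(1)$, which combined with weak linear complexity $\liminf(c_n(X) - tn) < \infty$ along a subsequence forces $k < t$.

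For each non-recurrent orbit $\orbit{x_i}$, $i = 1, \ldots, k$, I would choose a representative $x_i$ and a word $w_i$ of shortest possible length appearing only finitely many times in $x_i$; by this minimality, every proper factor of $w_i$ occurs infinitely often in $x_i$. After a shift, I would normalize so that the rightmost occurrence of $w_i$ in $x_i$ begins at position $0$; thus $w_i$ has no occurrence in $x_i$ beginning at any strictly positive position. Define the signature
\[
F_i^{(n)} := \{ x_i[j, j+n) : 1 - n \leq j \leq 1 - |w_i| \} \subseteq \LL_n(X).
\]
These are $n - |w_i| + 1$ length-$n$ factors of $x_i$ in which the distinguished rightmost copy of $w_i$ appears at pairwise distinct positions within the word, so $|F_i^{(n)}| = n - |w_i| + 1$.

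The heart of the argument---and the main obstacle---is an overlap estimate: for each $i \neq j$, $|F_i^{(n)} \cap F_j^{(n)}| = O(1)$ uniformly in $n$. A word $u \in F_i^{(n)} \cap F_j^{(n)}$ specifies the positions of $w_i$ and $w_j$ within $u$, which forces a long window of $x_i$ to coincide with a shifted window of $x_j$. Since $\orbit{x_i} \neq \orbit{x_j}$, no shift identifies $x_i$ with $x_j$ globally; coupled with the normalization that places both $w_i$ and $w_j$ at their rightmost occurrences (which bars $w_j$ from appearing anywhere to the right of position $0$ in $x_j$ and analogously for $w_i$), only finitely many relative shifts are admissible, and a careful case analysis shows each such shift contributes only boundedly many common words.

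Summing via inclusion--exclusion then yields $c_n(X) \geq kn - O_k(1)$ for $n$ large. Applying the weak linear bound along the subsequence where $c_n(X) - tn$ stays bounded, together with the observation that $\LL_n(X)$ always contains at least one length-$n$ word outside $\bigcup_i F_i^{(n)}$ (for instance, a length-$n$ factor of a recurrent point, or a window of some $x_i$ far from its marker), yields $k < t$. The delicate point is the overlap estimate: the interplay between the distinctness of the orbits and the rightmost-occurrence normalization is what forces shared signature words to be few, and carrying this out rigorously in the presence of possibly eventually-periodic tails is the technical crux of the proof.
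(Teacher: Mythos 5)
Your approach shares the paper's broad strategy --- mark each non-recurrent $x_i$ by a word $w_i$ witnessing its non-recurrence, then count length-$n$ windows of $x_i$ containing the marker --- but the step that makes the counting work is genuinely different, and that is exactly where your proof leaves a gap. The paper makes no overlap estimate at all. Instead it first \emph{extends} the markers (abandoning any minimality) so that for every $j\neq k$, either $w_j\notin\LL(x_k)$ or $w_k\notin\LL(x_j)$; this is achieved by a short case analysis (if $w_k$ occurs at least twice in $x_j$, extend $w_j$; if exactly once, either some extension of $w_k$ inside $x_k$ misses $x_j$, or else $x_j$ and $x_k$ would be shifts of one another). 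After this preprocessing the families of marked windows are literally pairwise disjoint, since a common word would contain both $w_j$ and $w_k$ while being a factor of both $x_j$ and $x_k$.

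By contrast, you fix $w_i$ as the shortest word occurring finitely often and normalize by its rightmost occurrence. This forecloses the extension trick, and you are then forced into the overlap estimate $|F_i^{(n)}\cap F_j^{(n)}|=O(1)$, which you correctly identify as the crux but do not prove. The sketch offered is not yet an argument: if $u=x_i[a,a+n)=x_j[b,b+n)$, the relative shift $s=a-b$ ranges over an interval of length on the order of $2n$, so ``only finitely many relative shifts are admissible'' is not a priori true; and the rightmost-occurrence normalization controls where $w_i$ sits inside $x_i$, but says nothing about where $w_j$ can sit inside $x_i$, which is precisely the configuration producing overlap words. Pinning this down rigorously essentially forces you through the same dichotomy the paper uses (does $w_j$ occur in $x_i$ zero times, once, or more often?), at which point extending the markers is strictly cleaner. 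A secondary issue: the strict inequality $k<t$ does not obviously follow from $c_n(X)\geq kn-O(1)$ together with $\liminf(c_n(X)-tn)<\infty$, since these are compatible when $k=t$; the single extra word you invoke only moves the additive constant, so even granting the overlap estimate the conclusion as stated needs a further argument.
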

 As a straightforward corollary, we can bound the number of degrees
 that are not the base of some cone:
\begin{corollary}\label{cor:linearfinitelonedegrees}
  If a subshift $X$ has weak linear complexity with index $t$, then there are fewer than $t$ nonzero degrees in $\spectrum{X}$ 
  whose associated cones are not subsets of $\spectrum{X}$.
\end{corollary}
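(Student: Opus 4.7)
The plan is to deduce the corollary from Lemma~\ref{lem:linearfinitelonedegrees} by exhibiting an injection from the set of ``bad'' degrees (nonzero degrees in $\spectrum{X}$ whose cone is not contained in $\spectrum{X}$) into the set of orbits of non-recurrent points of $X$.

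First, I would fix a nonzero degree $\turdeg{d} \in \spectrum{X}$ with $\cone{\turdeg{d}} \not\subseteq \spectrum{X}$ and pick any $x \in X$ with $\deg_T x = \turdeg{d}$. The key step is to apply the contrapositive of Lemma~\ref{lem:recurrentcone}: because $\cone{\deg_T x} \not\subseteq \spectrum{X}$, the point $x$ cannot be both aperiodic and recurrent. Since periodic points have computable coordinates and hence Turing degree $\turdegzero$, the assumption $\turdeg{d} \neq \turdegzero$ forces $x$ to be aperiodic, and therefore $x$ must be non-recurrent.

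Next, I would observe that the map sending such a degree $\turdeg{d}$ to the orbit $\orbit{x}$ of the chosen representative is well-defined and injective on the set of bad degrees. Well-defined because any two points in a single orbit differ by a power of the shift, which is a computable bijection, so they share a Turing degree; hence the orbit $\orbit{x}$ depends only on $x$ up to a choice that does not affect the target (different representatives of $\turdeg{d}$ in the same orbit yield the same orbit, and different orbits with the same degree would just give a non-injective but still well-defined assignment after picking one). Injectivity follows because two distinct bad degrees cannot map to the same orbit, since all points in a given orbit share a single Turing degree. Finally, every orbit in the image consists of non-recurrent points by the previous paragraph.

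Combining these observations, the number of bad degrees is at most the number of orbits of non-recurrent points in $X$, which by Lemma~\ref{lem:linearfinitelonedegrees} is strictly less than $t$. I do not anticipate a real obstacle here; the only small subtlety to flag carefully is the remark that the Turing degree is an orbit invariant, since a priori ``degrees in the spectrum'' and ``orbits of points'' are not the same set, and this is exactly what allows the counting in Lemma~\ref{lem:linearfinitelonedegrees} to transfer to a count of degrees.
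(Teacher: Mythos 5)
Your proof is correct and is exactly the ``straightforward'' deduction the paper intends (the paper states the corollary without an explicit proof): apply the contrapositive of Lemma~\ref{lem:recurrentcone} to see that any representative of a bad nonzero degree must be aperiodic (degree nonzero) and non-recurrent (cone not contained), then use that Turing degree is an orbit invariant to inject bad degrees into orbits of non-recurrent points and invoke Lemma~\ref{lem:linearfinitelonedegrees}.
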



\begin{proof}[Proof of \Cref{lem:linearfinitelonedegrees}]
  Suppose for a contradiction that there are at least
  $t$ non-recurrent points $x_1, \ldots, x_t \in X$ with distinct orbits.
  We may then define words $w_1,
  \ldots, w_t$ s.t. $w_i$ appears in $x_i$ exactly once for each $i$. Note that
  we can freely extend any $w_i$ to a larger subword of $x_i$ without losing
  this property.

  We claim that by extending the $w_i$ if necessary, we may impose the following additional property: for all $j \neq k$, either $w_j$ is not a subword of $x_k$ or $w_k$ is not a subword of $x_j$. Again, we note that once this property holds for a pair $(j,k)$, it is not lost if any $w_i$ is extended to a larger subword of $x_i$. It then suffices to show that for each fixed pair $(j,k)$ in turn, we may extend 
  $w_j$ and/or $w_k$ to larger words for which the desired property holds. 

  Fix any $j \neq k$. If $w_k$ is not a subword of $x_j$, then the desired property already holds. Similarly, if $w_k$ appears at least twice as a subword of $x_j$, then $x_j$ contains a subword containing $w_j$ and two occurrences of $w_k$, which itself cannot be a subword of $x_k$ since $w_k$ appears only once in $x_k$. Then, the desired property could be obtained by extending $w_j$ to this larger word. 
  The only remaining case is that $w_k$ appears exactly once in $x_j$. We claim that $w_k$ can then be extended to a larger subword of $x_k$ which does not appear in $x_j$ at all. Suppose for a contradiction that this is not the case, i.e. every subword of $x_k$ containing $w_k$ also appears in $x_j$. Since $w_k$ appears exactly once in each of $x_j$ and $x_k$, this would mean that for every $n$, the $n$ letters preceding and following $w_k$ are the same in $x_j$ and $x_k$, and so that $x_j$ is just a shift of $x_k$. We know that this is false, and so have a contradiction, implying that $w_k$ can be extended to a longer subword of $x_k$ for which the desired property holds. 

  By earlier observations, since $j \neq k$ were arbitrary, we may now assume without loss of generality that $w_i$ appears exactly once in $x_i$ for each $i$, and that for all $j \neq k$, either $w_j$ is not a subword of $x_k$ or $w_k$ is not a subword of $x_j$. By shifting the $x_i$ and/or extending the $w_i$ if necessary, we may also assume that there exists $N$ so that $x_i(-N) \ldots x_i(N) = w_i$ for each $i$. Now, for any $n > 2N$, define the following words in $\LL_n(X)$: $v_{i,m} = x_i(m) \ldots x_i(m + n - 1)$ for $1 \leq i \leq t$ and $N - n < m \leq -N$. Note that by definition, $v_{i,m}$ is always a subword of $x_i$ containing $w_i$. We claim that all $v_{i,m}$ are distinct. First, note that for $j \neq k$ and any $m, m'$, $v_{j,m}$ contains $w_j$ and is a subword of $x_j$, and $v_{k,m'}$ contains $w_k$ and is a subword of $x_k$. If $v_{j,m} = v_{k,m'}$, then $w_j$ is a subword of $x_k$ and $w_k$ is a subword of $x_j$, which we know not to be the case. Therefore, $v_{j,m} \neq v_{k,m'}$ whenever $j \neq k$. It remains only to show that $v_{i,m} \neq v_{i,m'}$ whenever $m \neq m'$. To see this, note that each of $v_{i,m}$ and $v_{i,m'}$ contains $w_i$ exactly once, and that $w_i$ appears in each at different locations. This completes the proof that all $v_{i,m}$ are distinct, which implies that 
  $c_n(X) \geq t(n - 2N) = tn - 2tN$ for all $n > 2N$. Therefore, $\liminf
  c_n(X) - tn > -\infty$, contradicting weak linear complexity of $X$ with index $t$.
  The assumption that ${X}$ contains at least $t$ orbits of non-recurrent points was then false, completing the proof.
\end{proof}

\subsubsection{Linear complexity and number of cones}
Now that we have control over the degrees of non-recurrent points, we focus
on showing that the remaining nonzero degrees can be written as a finite union of cones.


In order to prove this, we will need several intermediate steps. The first
is to show that there are fewer than $t$ possibilities for the set of subwords of
an aperiodic point of $X$. For $x \in X$, we denote its set of subwords by $\LL(x)$, and
accordingly define $c_n(x)$ and the notions of strong/weak linear complexity of $x$.

\begin{lemma}\label{lem:aperiodicfinitelangs}
  If a subshift $X$ has weak linear complexity with index $t$, and $m$ is the finite number of orbits of non-recurrent points 
  in $X$ guaranteed by \Cref{lem:linearfinitelonedegrees},
	then the set
  \[\left\{ \LL(x)\mid x\in X
      \text{ is aperiodic and recurrent} \right\}\]
  has cardinality less than $t - m$.
\end{lemma}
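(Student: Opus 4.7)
Plan: I would mirror the scheme of \Cref{lem:linearfinitelonedegrees} and argue by contradiction. Suppose there exist aperiodic recurrent points $y_1,\dots,y_s\in X$, with $s\ge t-m$, having pairwise distinct languages $L_j:=\LL(y_j)$. Together with orbit representatives $x_1,\dots,x_m$ of the non-recurrent orbits of $X$, I aim to exhibit enough length-$n$ words in $\LL_n(X)$ to force $\liminf(c_n(X)-tn)>-\infty$, contradicting weak linear complexity with index $t$ in the same way the previous lemma's proof did.

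On the $x_i$-side, I would refine the choice from \Cref{lem:linearfinitelonedegrees} to obtain subwords $w_i$ of $x_i$ that occur exactly once in $x_i$, satisfy the pairwise ``either-or'' distinguishing condition, and additionally avoid every $L_j$. The latter extension is possible because a long enough window around the unique occurrence of $w_i$ inside $x_i$ cannot appear in $\LL(z)$ for any $z\in\overline{\orbit{x_i}}\setminus\orbit{x_i}$: a single occurrence in such a limit point would force infinitely many occurrences in $x_i$ itself, by compactness. In particular no extended $w_i$ appears in any recurrent $y_j$, since the $y_j$ cannot lie in $\orbit{x_i}$. On the $y_j$-side, I would produce subwords $u_j\in L_j$ with the analogous pairwise property: for every $j\neq k$, either $u_j\notin L_k$ or $u_k\notin L_j$. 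This uses distinctness of the $L_j$'s (so $L_j\setminus L_k$ or $L_k\setminus L_j$ is nonempty) together with a pair-by-pair extension in the style of \Cref{lem:linearfinitelonedegrees}, with recurrence of each $y_j$ allowing any finite collection of its subwords to be bundled into a single subword.

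The counting then combines the $m(n-2N)$ distinct $w_i$-shifts from the $x_i$ (none of which lies in any $L_j$) with, for each $j$, the length-$n$ subwords of $y_j$ containing $u_j$. By the pairwise property these collections are mutually disjoint and disjoint from the $w_i$-words, so if each $y_j$ contributes at least $n-O(1)$ of them, one obtains $c_n(X)\ge (m+s)(n-O(1))\ge tn-O(1)$ for all large $n$, which yields the desired contradiction via $\liminf(c_n(X)-tn)>-\infty$.

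The main obstacle is the per-$y_j$ count. Unlike the non-recurrent case, $u_j$ appears infinitely often in the recurrent $y_j$, so the positional-shift trick of \Cref{lem:linearfinitelonedegrees}, which relied on a unique occurrence of $w_i$, does not transfer directly. I would address this by extending $u_j$ until its occurrences are dense enough in $y_j$ that every length-$n$ window eventually contains one; then $c_n(y_j)\ge n+1$ by Morse--Hedlund furnishes the needed $n-O(1)$ distinct $u_j$-marked words. In the uniformly recurrent case this is essentially immediate (extend $u_j$ past its maximal gap), but a merely recurrent aperiodic $y_j$ need not be uniformly recurrent, so one likely has to first pass to a minimal subsystem of $\overline{\orbit{y_j}}$, where uniform recurrence holds, and then transfer the count back to $y_j$. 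The delicate step will be preserving the pairwise distinguishing property of the $u_j$'s throughout these extensions, which is where the real technical work of the proof should lie.
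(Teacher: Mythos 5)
Your high-level plan (argue by contradiction, count distinct length-$n$ words arising from the $m$ non-recurrent orbits and from $s\ge t-m$ recurrent points with distinct languages, and contradict weak linear complexity with index $t$) is the same as the paper's, but there are two genuine gaps in the execution.

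First, your requirement that the extended $w_i$ avoid every $L_j=\LL(y_j)$ is not established by the argument you give. You show that a long window around the unique $w_i$ in $x_i$ cannot appear in points of $\overline{\orbit{x_i}}\setminus\orbit{x_i}$, and then conclude it cannot appear in any $y_j$; but that conclusion only covers $y_j\in\overline{\orbit{x_i}}$. Nothing rules out the opposite containment $x_i\in\overline{\orbit{y_j}}$, i.e.\ $\LL(x_i)\subseteq\LL(y_j)$, which is perfectly compatible with $x_i$ non-recurrent and $y_j$ recurrent; in that case \emph{every} extension of $w_i$ lies in $L_j$ and your desired property is literally unachievable. The paper sidesteps this entirely by choosing, for a length $N'$ that separates the languages $\LL_{N'}(y_j)$, a window $v_j$ of $y_j$ in which every word of $\LL_{N'}(y_j)$ appears at least twice. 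Then any length-$n$ subword of $y_j$ containing $v_j$ has $\LL_{N'}(\cdot)=\LL_{N'}(y_j)$, which simultaneously separates the $y_j$-collections from one another and from the $x_i$-collection (whose words contain some $N'$-word exactly once), with no need for the $w_i$ to avoid anything.

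Second, and more seriously, the per-$y_j$ count is left unresolved. You correctly identify that your ``extend $u_j$ until every length-$n$ window contains it'' plan requires uniform recurrence, which merely recurrent points need not have. But the proposed repair — pass to a minimal subsystem of $\overline{\orbit{y_j}}$ — does not work: the word $u_j$ you constructed may fail to lie in the language of the minimal subsystem at all, distinct $\LL(y_j)$'s need not yield distinct minimal subsystems (nor aperiodic ones), and ``transferring the count back'' has no clear meaning. The missing idea is the paper's dichotomy: for each $n$, either \emph{every} length-$n$ subword of $y_j$ contains $v_j$, in which case Morse--Hedlund already gives $c_n(y_j)\ge n+1$ marked words; or some length-$n$ window misses $v_j$, in which case one can position such a window flush against an occurrence of $v_j$ and slide a length-$n$ window across that junction, producing $n-O(1)$ distinct subwords each containing $v_j$, because the rightmost $v_j$-occurrence moves with each slide. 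This two-case argument needs only recurrence and aperiodicity, not uniform recurrence, and is the step your proposal is missing.
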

\begin{proof}
 Recall from the proof of \Cref{lem:linearfinitelonedegrees} that there exists $N$
 so that for all $n > 2N$, there are at least $m(n - 2N)$ words in $\LL_n(X)$ 
 which each contain some word in $\LL_{2N+1}(X)$ only once (these were the
 words $w_i$ in that proof.)

 Now, suppose that $x_1,\dots,x_k$ have different languages and are recurrent and aperiodic. There
 exists an $N' \geq 2N + 1$ such that $\LL_{N'}(x_1),\dots,\LL_{N'}(x_k)$ are all distinct. By recurrence
 of the $x_i$, there exists a window $M$ such that every word in $\LL_{N'}(x_i)$ appears at least twice
 in each $v_i := x_i([-M,M])$.

 We now claim that for every $n>N'$, there are at least $n-2M$ words in $\LL_n(x_i)$ that contain 
 $v_i$. To see this, choose any $1 \leq i \leq k$.
 \begin{itemize}
 \item Suppose that every $n$-letter subword of $x_i$ contains $v_i$. By the Morse-Hedlund theorem, since $x_i$ 
 is aperiodic, $c_n(x_i) > n$, immediately yielding the desired result. 
 \item Otherwise, there exists an $n$-letter subword of $x_i$ not containing $v_i$.
   Without loss of generality (by shifting if necessary), assume that 
	 $x_i(1) \ldots x_i(n)$ does not contain $v_i$ and is immediately preceded or followed
	 by $v_i$; we treat only the former case, as the latter is trivially similar.
	 Then consider the $n$-letter subwords of $x_i$ defined by $u_j = x_i(-j) \ldots x_i(n - j - 1)$ 
	 for $2M \leq j < n$. Each $u_j$ contains its rightmost occurrence of $v_i$ starting
	 at its $(j - 2M + 1)$th letter, and so all $u_j$ are distinct, verifying the claim.
 \end{itemize}
 For each of the $x_i$, denote by $L_i$ this collection of words. 
 The $L_i$ are all disjoint; for $x \in L_i$, $\LL_{N'}(x) = \LL_{N'}(x_i)$,
 since $x$ contains $v_i$. Therefore, $\LL_n(X)$ contains at least $k(n - 2M)$ words
 which contain some $v_i$ and therefore contain every word in $\LL_{N'}(X)$ at least twice. 
 Since $N' \geq 2N+1$, by the above, there are at least $m(n - 2N)$ words in
 $\LL_n(X)$ which contain some word in $\LL_{N'}(X)$ only once. 
 Therefore, for $n>4M$, $c_n(X) > k(n-2M) + m(n - 2N)$, and so by weak linear complexity of $X$
 with index $t$, $k < t - m$.
\end{proof}

We will now show that the set of nonzero degrees of recurrent points of $X$ 
is precisely the union of the cones above the degrees of the finite set $\{\LL(x) \ : \ x \in X\}$.

\begin{lemma}\label{lem:linearrecurrentabovelanguage}
  If $X$ is a subshift with strong linear complexity, then for every $x\in X$, $\deg_T x\geq_T \deg_T \LL(x)$.
\end{lemma}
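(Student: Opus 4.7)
The goal is to show $\LL(x) \leq_T x$ for every $x \in X$. The direction that $\LL(x)$ is recursively enumerable relative to $x$ is immediate: given $x$ as oracle, simply scan $x$ and list all finite subwords encountered. The real content is to certify non-membership, i.e. to show the complement of $\LL(x)$ is also $x$-r.e.

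First, I would pass to the orbit closure $X_x := \overline{\orbit{x}}$. A standard approximation argument shows that any subword of any point in $\overline{\orbit{x}}$ already appears in some shift of $x$, so $\LL(X_x) = \LL(x)$. Moreover $c_n(X_x) \leq c_n(X) \leq tn + C$ (for some constant $C$), so $X_x$ inherits the strong linear complexity of $X$ with the same index $t$. Thus without loss of generality we may analyze a subshift that is genuinely generated by $x$, which means $\LL_n(x)$ equals $\LL_n(X_x)$ exactly, not merely a subset of it.

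Next, the plan is to inductively compute $|\LL_n(x)|$ as an $x$-computable function of $n$. Once we have $|\LL_n(x)|$ in hand, we enumerate subwords of $x$ until exactly $|\LL_n(x)|$ distinct length-$n$ words have been produced; that yields $\LL_n(x)$ on the nose and decides membership for any length-$n$ word $w$. The base case $|\LL_0(x)|=1$ is trivial. For the inductive step, I would use the identity $|\LL_{n+1}(x)| = \sum_{w \in \LL_n(x)}|R(w)|$, where $R(w)\subseteq\A$ is the set of symbols $a$ with $wa \in \LL(x)$, together with the fact that, having inductively determined $\LL_n(x)$ exactly, we can enumerate the extensions $R(w)$ of each of its (finitely many) elements. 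The strong linear complexity bound $|\LL_{n+1}(x)| \leq t(n+1) + C$ then caps the total number of extensions to be found.

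The main obstacle is precisely certifying that we have found all right-extensions, since $|R(w)|$ for a fixed $w$ is only $x$-r.e. I expect the argument to rely on the structural consequence of strong linear complexity that the consecutive differences $|\LL_{n+1}(x)| - |\LL_n(x)|$ are uniformly bounded, so that only a bounded number of $w \in \LL_n(x)$ can be right-special at any given length. Combined with the dual left-extension identity $|\LL_{n+1}(x)| = \sum_{u \in \LL_n(x)} |L(u)|$, this yields enough global bookkeeping to certify, once sufficiently many subwords have been observed, that every right-extension of every $w \in \LL_n(x)$ has already been enumerated — giving the stabilization detection that makes $|\LL_n(x)|$ and hence $\LL(x)$ itself $x$-computable.
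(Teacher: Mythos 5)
Your high-level plan is correct in identifying the two pillars: the $x$-r.e.\ direction is trivial, and the real tool is the fact that, for a sequence of strong linear complexity, the first differences $c_{n+1}(x) - c_n(x)$ are uniformly bounded. Be aware, though, that this bound is Cassaigne's theorem (\cite{Cassaigne95}) and not an elementary ``structural consequence'' of $\limsup(c_n - tn) < \infty$; the paper's proof cites it explicitly, and yours should too. The orbit-closure reduction at the start is fine but unnecessary; the argument works directly with $\LL(x)$.

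The genuine gap is exactly the step you flag and then wave away. Your plan is a length-by-length induction: knowing $\LL_n(x)$, certify $\LL_{n+1}(x)$ by counting right-extensions. But at a fixed $n$, the true increment $c_{n+1}(x)-c_n(x)$ may be strictly smaller than the Cassaigne bound, and the observed count of extensions is only a lower bound that you cannot ever recognize as having stabilized. Appealing to the dual left-extension identity does not help: the observed left-special degree sum is likewise only a lower bound, and agreement of two lower bounds certifies nothing. As written, the induction has no valid halting criterion at a generic $n$, so ``enough global bookkeeping'' is doing all the work with no argument behind it.

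The paper's resolution abandons the length-by-length induction in favor of two specific ideas you are missing. First, set $M = \limsup_n\bigl(c_{n+1}(x)-c_n(x)\bigr)$ (not the supremum), so that there are infinitely many ``good'' lengths $k$ at which the right-special degree sum actually equals $M$; for those $k$, and only those, the event ``observed right-special degree sum $= M$'' is a true stopping certificate, and the algorithm (with $M$ and a suitable $N$ hardcoded) finds such a $k > n$ by scanning longer and longer windows of $x$. Second, once the complete set $S$ of right-special words of length $k$ and all of their followers is in hand, $\LL_k(x)$ is reconstructed \emph{in one shot}: every non-right-special word of length $k$ forces its next letter, so every word of $\LL_k(x)$ lies in some forced run $saw$ with $s\in S$, $a\in\A$, and $saw$ ending in a word of $S$, and these runs can all be read off from $x$. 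No knowledge of $\LL_{k-1}(x)$ is required, so there is nothing intermediate to certify. Finally $\LL_n(x)$ is the set of length-$n$ factors of $\LL_k(x)$. Without these two ingredients --- the $\limsup$ trick that makes the certificate achievable, and the direct (non-inductive) reconstruction via forcing --- your proposal does not close.
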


In order to prove this result, we will need the notion of a \textbf{right special word}
within a sequence $x$: a word $w \in \LL(x)$ is \textbf{right special for $x$} if there exist at least two ways
to extend it on the right, i.e. letters $a \neq b$ such that $wa\in\LL(x)$
and $wb\in \LL(x)$. The degree $d_x(w)$ of $w$ a right special word for $x$ is the number of
additional such extensions it has: a right special word with $d+1$ extensions in $\LL(x)$ has
degree\footnote{It should always be clear from context whether the degree in
  consideration is a degree of a right special word or a Turing degree since
  Turing degrees of finite words are always \turdegzero.} $d$. It follows that for any 
	$n$, the sum of the degrees of all right special words of length $n$ is $c_{n+1}(x) - c_n(x)$.
	It's obvious then that if $c_{n+1}(x) - c_n(x)$ is bounded, then $x$ has strong linear complexity.
	Surprisingly, the converse is also true, as proved by Cassaigne.
	

\begin{theorem}[Cassaigne~\cite{Cassaigne95}]\label{thm:cassaigne}
  If $x$ is a sequence with strong linear complexity, then 
	$c_{n+1}(x) - c_n(x)$ is bounded from above by a constant.
\end{theorem}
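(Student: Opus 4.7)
The plan is to prove this via Cassaigne's classical method of \emph{bispecial factors} and \emph{bilateral multiplicities}. Set $s(n) := c_{n+1}(x) - c_n(x)$; the goal is to show $\sup_n s(n) < \infty$. Grouping words of length $n+1$ by their length-$n$ prefix gives $s(n) = \sum_{w \in \LL_n(x)} (r(w) - 1)$, where $r(w) := |\{a \in \A : wa \in \LL(x)\}|$, and symmetrically $s(n) = \sum_w (l(w) - 1)$ with $l(w)$ the analogous left count. Strong linear complexity with index $t$ translates to the averaged bound $\sum_{k=0}^{n-1} s(k) = c_n(x) - c_0(x) \leq tn + O(1)$, which must be upgraded to a uniform bound.

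The central algebraic step is an identity for consecutive differences of $s$. For $w \in \LL_n(x)$, let $b(w) := |\{(a,c) \in \A^2 : awc \in \LL(x)\}|$ and define the bilateral multiplicity $m(w) := b(w) - l(w) - r(w) + 1$. Decomposing $\LL_{n+2}(x)$ by its middle length-$n$ factor yields
\[
  s(n+1) - s(n) \;=\; \sum_{w \in \LL_n(x)} m(w).
\]
A short case analysis shows that $m(w) = 0$ unless $w$ is bispecial (both $l(w) \geq 2$ and $r(w) \geq 2$), and in all cases $b(w) \geq \max(l(w), r(w))$ yields $m(w) \geq 1 - \min(l(w), r(w)) \geq 1 - |\A|$; thus only bispecial factors contribute, and each by an amount bounded in terms of the alphabet.

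From here the plan is to argue by contradiction. Assume $s(n_k) \to \infty$; since $\sum_{k<n} s(k) \leq tn + O(1)$, $s$ must drop back to bounded values after each peak, which via the identity requires large negative bilateral contributions, hence many bispecials of length near $n_k$ carrying $m(w) < 0$. Every bispecial factor is in particular right-special and so contributes at least one unit to $s$, so the number of usable bispecials at any length is itself bounded by $s(n)$, constraining how quickly $s$ can fall. A partition of bispecials according to the sign of $m(w)$, combined with iterating the identity telescopically and comparing against the averaged bound, produces a lower bound on $c_n(x)$ growing faster than any linear function, contradicting the hypothesis. The main obstacle is precisely this last piece of bookkeeping: positive bilateral multiplicities can temporarily replenish right-special branching and mask the cost of subsequent negative contributions, so one must simultaneously track the strong, neutral, and weak bispecials and show that over long enough windows the cumulative cost of reducing $s$ from a large peak cannot be absorbed without violating strong linear complexity.
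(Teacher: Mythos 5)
The paper does not prove this theorem; it is cited directly from Cassaigne~\cite{Cassaigne95}, so there is no internal argument to compare yours against, and your proposal must be judged on its own. The framework you set up is the correct and standard one: writing $s(n) = c_{n+1}(x) - c_n(x) = \sum_{w \in \LL_n(x)} (r(w)-1) = \sum_{w}(l(w)-1)$, defining $m(w) = b(w) - l(w) - r(w) + 1$, deriving the identity $s(n+1) - s(n) = \sum_{w \in \LL_n(x)} m(w)$, observing that $m(w) = 0$ unless $w$ is bispecial, and proving the two-sided bounds $1 - |\A| \le m(w) \le (|\A|-1)^2$ are all correct, and these are exactly Cassaigne's ingredients.

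However, the final step you describe as ``bookkeeping'' is a genuine gap, not a routine verification. The local constraints you have established --- the recurrence for $s(n+1) - s(n)$, the pointwise bounds on $m(w)$, the bound on the number of bispecials of length $n$ by $s(n)$, and the averaged bound $\sum_{k<n} s(k) \le tn + O(1)$ --- do not together imply $\sup_n s(n) < \infty$. The strongest consequence they yield directly is $s(n+1) \le |\A| \, s(n)$, obtained from $m(w) \le (|\A|-1)(r(w)-1)$ and $\sum_w (r(w)-1) = s(n)$; this allows $s$ to climb multiplicatively from a bounded value and then collapse again, and a sequence of such bumps of slowly growing height placed at sparsely spaced lengths is compatible with everything you have written, including the averaged bound. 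Ruling out exactly this behaviour is the real content of Cassaigne's theorem: one needs an amortization argument that charges the branching consumed by weak bispecials (which reduce $s$) against the right-special structure that must have been built up earlier along the suffix chains of those words, so that a large spike in $s$ forces a super-linear total count of right-special extensions and contradicts strong linear complexity globally, not just in a window around the spike. No such charging scheme is supplied, so as written the argument does not reach the conclusion.
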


In fact, the application of \Cref{thm:cassaigne} is the only place where we require the assumption of
strong (rather than weak) linear complexity.

\begin{proof}[Proof of \Cref{lem:linearrecurrentabovelanguage}]

  We now want to prove that from any point $x\in X$ we can compute $\LL(x)$.
  By \Cref{thm:cassaigne}, there exist
  $N,M$ such that for every $n>N$, $c_{n+1}(x) - c_n(x) \leq M$, and for infinitely many $n$,
  $c_{n+1}(x) - c_n(x) = M$. This means that for a given $n>N$, once we find a set $S$ of right
  special words of length $n$ where $\sum_{w\in S} d_x(w)=M$, then we know we have found
	all possible right special words for $x$ of length $n$, and all possible letters which can follow them in $x$.

  Our algorithm is the following: for input $n$, the algorithm scans larger
	and larger portions of $x$, looking for words of the form $wa$ and $wb$ for
	$w$ of length at least $n$. It keeps a list of such words $w$ it has found, along with
	the letters that can follow $w$, and whenever it begins a new scan of a longer portion
	of $x$, it increments the maximum length of $w$ in its search (beginning by only looking
  for $w$ of length exactly $n$, then $n$ or $n+1$, and so on.) 
	
	By definition of $M$, it will eventually find $k>n$ and a set $S$ of right-special words for $x$ of 
	length $k$ with extensions yielding degrees which sum to $M$, meaning that the algorithm 
	has the complete set of right special words for $x$ of length $k$ and their followers.

  We now claim that a further algorithm can recover the entire $k$-language $\LL_k(x)$. 
	In the sequence $x$, every word that is not right special has exactly one possible follower. 
	This means that every $k$-letter word appearing in $x$ which is not right special forces the following letter, 
	which in turn forces another letter, until one arrives at a right special word in $S$.
	Therefore, every word in $\LL_k(x)$ is a subword of some word of the form $saw$, where $s \in S$, $a \in \A$,
	$w$ is forced by $s$ and $a$, and $saw$ ends with a word in $S$.  
	The algorithm can then just search $x$ for occurrences
	of every possible $sa$ for $s \in S$ and $a \in \A$ and follow them to the right until hitting another word in $S$;
	$\LL_k(x)$ is then the set of all $k$-letter words encountered in this process.
  Finally, $\LL_n(x)$ is just the set of subwords of $\LL_k(x)$, and so the algorithm 
	has computed $\LL_n(x)$ from $x$; since $n$ was arbitrary, the proof is complete.

\end{proof}

\begin{lemma}\label{lem:linearrecurrentbelowlanguage}
 For any subshift $X$ and aperiodic recurrent $x\in X$, 
 there exists an aperiodic recurrent $y\in X$ such that $\LL(x)=\LL(y)$ and $\deg_T y \leq_T \deg_T \LL(x)$. 
\end{lemma}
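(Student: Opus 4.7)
The plan is to construct $y$ directly as a limit of finite words $(v_k)_{k\geq 0}$, each in $\LL(x)$, with the entire construction uniformly computable using $\LL(x)$ as an oracle. The resulting $y$ will satisfy $\LL(y)=\LL(x)$, which forces $y$ to be aperiodic: since $x$ is aperiodic, Morse--Hedlund gives $c_n(x)>n$ for all $n$, so $\LL(x)$ cannot be the language of any periodic point.

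First, set $v_0=x(0)$, and inductively pick $v_{k+1}\in\LL(x)$ so that (i) $v_{k+1}$ contains $v_k$ as a subword at least $k$ positions from each edge of $v_{k+1}$, and (ii) $v_{k+1}$ contains each word of $\LL_k(x)$ at least $k$ times. Such a word exists: fix any occurrence of $v_k$ in $x$, and enlarge it to a finite window of $x$ that extends at least $k$ letters on each side and captures at least $k$ occurrences of each of the finitely many words in $\LL_k(x)$. This is possible because $x$ is recurrent, so each word in $\LL_k(x)$ appears infinitely often in $x$. The search for $v_{k+1}$ is algorithmic given $\LL(x)$: enumerate candidate words by length, query the oracle to check membership in $\LL(x)$ and to extract the finite set $\LL_k(x)$, and verify (i)--(ii) directly. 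The search terminates because a valid $v_{k+1}$ exists.

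Letting $y=\lim_k v_k$, where each $v_k$ is placed around position $0$ so that $v_k$ sits in a fixed location inside every later $v_{k+j}$, the sequence $y$ is a well-defined bi-infinite point of $\A^{\ZZ}$ because $v_k$ extends at least $k$ letters on each side at step $k$. By construction, every word of $\LL(x)$ appears in some $v_k$, so $\LL(x)\subseteq\LL(y)$; conversely, $v_k\in\LL(x)$ and $\LL(x)$ is factor-closed, so $\LL(y)\subseteq\LL(x)$. Hence $\LL(y)=\LL(x)$, which in particular gives $y\in X$. Recurrence of $y$ follows since each $w\in\LL(y)=\LL(x)$ occurs at least $k$ times in $v_k$ for all sufficiently large $k$, hence infinitely often in $y$. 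Aperiodicity follows as above. The whole procedure is uniformly $\LL(x)$-computable, so $\deg_T y\leq_T \deg_T\LL(x)$.

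The main obstacle is justifying the existence of a valid $v_{k+1}$ in $\LL(x)$ at each step; this is exactly where the recurrence of $x$ (together with the finiteness of $\LL_k(x)$) is used. Everything else — computing $\LL_k(x)$ from the oracle, verifying (i)--(ii), and reading off $y$ from the stabilizing portions of $v_k$ — is routine algorithmic bookkeeping.
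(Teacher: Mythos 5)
Your proof is correct and takes essentially the same approach as the paper's: construct $y$ as a limit of nested finite words in $\LL(x)$, each chosen (via an algorithm querying the oracle $\LL(x)$) to contain increasingly many words of the language with multiple occurrences, which simultaneously gives $\LL(y)=\LL(x)$, recurrence of $y$, and $\deg_T y \leq_T \deg_T\LL(x)$. The minor differences (you ask for $k$ copies of each word in $\LL_k(x)$ and at least $k$ letters of padding on each side, rather than the paper's two copies and centering; you invoke Morse--Hedlund for aperiodicity where the paper simply cites aperiodicity of $x$) are cosmetic.
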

\begin{proof}
  We give a simple algorithm to construct $y$ from $\LL(x)$. 
	Define $w_1$ to be the minimal word (in the lexicographic order) in $\LL(x)$ which
	contains all words of $\LL_1(x)$ at least twice (such words exist by recurrence of $x$). 

  For all $i > 0$, define $w_{i+1}$ to be the minimal word in $\LL(x)$ which
	contains all words of $\LL_{|w_i|}(x)$ at least twice 
	and which contains $w_i$ in its center (again using recurrence of $x$). 
	The sequence $(w_i)$ then converges to a point, which we call $y$. 
	Since $y$ contains all words in $\LL_{|w_i|}(x)$ for all $i$, and $y$
	is a limit of words in $\LL(x)$, $\LL(y) = \LL(x)$. Aperiodicity of $y$
	follows immediately from this fact and aperiodicity of $x$. Every word
	in $\LL(x)$ occurs at least twice in $y$ by construction, and so $y$ 
	is recurrent. Finally, $y$ was computed from $\LL(x)$, and by 
	\Cref{lem:linearrecurrentabovelanguage}, $\deg_T y \geq_T \deg_T \LL(y) = \deg_T \LL(x)$.
	Therefore, $\deg_T y = \deg_T \LL(x)$.
\end{proof}

The following is an immediate corollary of \Cref{lem:linearrecurrentabovelanguage} and \Cref{lem:linearrecurrentbelowlanguage}.

\begin{corollary}\label{cor:linearrecurrentequallanguage}
 If $X$ is a subshift with strong linear complexity, then for every aperiodic recurrent $x\in X$, 
 there exists an aperiodic recurrent $y\in X$ such that $\LL(x)=\LL(y)$ and $\deg_T y = \deg_T \LL(x)$. 
\end{corollary}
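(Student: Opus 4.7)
The plan is to obtain this as a direct combination of the two preceding lemmas, with essentially no additional work beyond chaining inequalities. Concretely, I would start by applying \Cref{lem:linearrecurrentbelowlanguage} to the given aperiodic recurrent point $x \in X$. This lemma holds for any subshift (no complexity hypothesis needed) and produces an aperiodic recurrent $y \in X$ satisfying $\LL(y) = \LL(x)$ and $\deg_T y \leq_T \deg_T \LL(x)$.

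Next, I would invoke \Cref{lem:linearrecurrentabovelanguage} on this $y$. Here the strong linear complexity hypothesis on $X$ is essential, since that lemma requires it. Applying it to $y \in X$ gives $\deg_T y \geq_T \deg_T \LL(y)$, and because $\LL(y) = \LL(x)$ by construction, this reads $\deg_T y \geq_T \deg_T \LL(x)$.

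Combining the two inequalities $\deg_T y \leq_T \deg_T \LL(x)$ and $\deg_T y \geq_T \deg_T \LL(x)$ yields $\deg_T y = \deg_T \LL(x)$, while the properties $y \in X$, $y$ aperiodic and recurrent, and $\LL(y) = \LL(x)$ are already recorded from the first step. This completes the proof. There is no real obstacle here: the only subtlety is checking that the hypothesis of \Cref{lem:linearrecurrentabovelanguage} (strong linear complexity of $X$) transfers to $y$, and it does trivially because $y$ lies in $X$.
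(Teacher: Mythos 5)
Your proof is correct and matches the paper's intent exactly: the paper simply states that the corollary is immediate from Lemma~\ref{lem:linearrecurrentabovelanguage} and Lemma~\ref{lem:linearrecurrentbelowlanguage}, and you spell out precisely the chaining of the two inequalities (applying the latter to produce $y$ with $\deg_T y \leq_T \deg_T \LL(x)$, then the former to $y$ to get the reverse). No gaps.
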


We may now prove \Cref{thm:linearcomplexity}.

\begin{proof}[Proof of \Cref{thm:linearcomplexity}]
  By \Cref{lem:linearfinitelonedegrees}, there exist only finitely many nonzero degrees 
	$\turdeg{d_1}, \ldots, \turdeg{d_m} \in \spectrum{X}$ for $m < t$ associated to non-recurrent points in $X$.
	Then by \Cref{lem:recurrentcone}, for every $\turdeg{d} \in \spectrum{X} \diagdown \{\turdegzero\}$ 
	not equal to any $d_i$, $\cone{d} \subseteq \spectrum{X}$. 
	
  By \Cref{lem:aperiodicfinitelangs}, there exist $x_1, \ldots, x_c$ for $c < t - m$
	so that for all aperiodic non-recurrent $x \in X$, $\LL(x) = \LL(x_i)$ for some $i$. For every 
	$1 \leq i \leq c$, by \Cref{cor:linearrecurrentequallanguage}, there exists
	aperiodic recurrent $y_i$ with $\LL(y_i) = \LL(x_i)$ and $\deg_T(y_i) = \deg_T(\LL(x_i))$. 
	
	Then, by \Cref{lem:recurrentcone}, $\cone{y_i} \subset \spectrum{X}$ for $1 \leq i \leq k$,
	and so $\{\turdeg{d_1}, \ldots, \turdeg{d_m}\} \cup \bigcup_{i=1}^c \cone{y_i} \subset \spectrum{X}$.
  Conversely, for every aperiodic recurrent $x \in X$, by \Cref{lem:linearrecurrentabovelanguage}, 
	$\deg_T x \geq_T \deg_T \LL(x)$. There exists $i$ so that $\LL(x) = \LL(y_i)$, and so in fact
	$\deg_T x \geq_T \deg_T \LL(y_i) = \deg_T y_i$. Therefore, the set
	of degrees of aperiodic recurrent $x \in X$ is contained in $\bigcup_{i=1}^c \cone{y_i}$, meaning that
	$\spectrum{X} \diagdown \{\turdegzero\} \subset \{\turdeg{d_1}, \ldots, \turdeg{d_m}\} \cup \bigcup_{i=1}^c \cone{y_i} \subset \spectrum{X}$, 
	and so that $\spectrum{X} \diagdown \{\turdegzero\} = \{\turdeg{d_1}, \ldots, \turdeg{d_m}\} \cup \bigcup_{i=1}^c \cone{y_i} \subset \spectrum{X}$,
	completing the proof.
	
\end{proof}
\subsubsection{Minimal subshifts with linear complexity have spectrum equal to a cone}
The preceding results also have strong consequences for minimal subshifts with linear complexity;
the spectrum of any such shift is a single cone, whose base is the degree of its language.

\begin{theorem}\label{thm:minimallinearonecone}
  If an infinite minimal subshift $X$ has strong linear complexity, then $\spectrum{X}=\cone{\deg_T \LL(X)}$.
\end{theorem}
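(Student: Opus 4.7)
The plan is to show each inclusion of $\spectrum{X} = \cone{\deg_T \LL(X)}$ separately, using the lemmas already established in this subsection.

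First I would observe that since $X$ is infinite and minimal, every point $x \in X$ is aperiodic (any periodic orbit would be a proper subsystem, forcing $X$ to coincide with that finite orbit, contradicting infiniteness) and recurrent (all points in a minimal subshift are in fact uniformly recurrent). Moreover, minimality implies $\LL(x) = \LL(X)$ for every $x \in X$, since the orbit closure of any point equals $X$.

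For the inclusion $\spectrum{X} \subseteq \cone{\deg_T \LL(X)}$, I would apply \Cref{lem:linearrecurrentabovelanguage}: for every $x \in X$, $\deg_T x \geq_T \deg_T \LL(x) = \deg_T \LL(X)$. Hence every degree in $\spectrum{X}$ lies in $\cone{\deg_T \LL(X)}$.

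For the reverse inclusion, I would invoke \Cref{cor:linearrecurrentequallanguage} applied to any aperiodic recurrent point of $X$ (which exists by the first paragraph): this yields an aperiodic recurrent $y \in X$ with $\LL(y) = \LL(X)$ and $\deg_T y = \deg_T \LL(X)$. Then by \Cref{lem:recurrentcone}, $\cone{\deg_T \LL(X)} = \cone{y} \subseteq \spectrum{X}$. Combining the two inclusions completes the proof.

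There is no real obstacle here; the statement is essentially a corollary that packages the preceding lemmas in the minimal setting. The only subtlety is the preliminary observation that an infinite minimal subshift has no periodic points and that $\LL(x)$ is constant on $X$, both of which are standard consequences of minimality.
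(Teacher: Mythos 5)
Your proposal is correct and follows essentially the same route as the paper's proof: establish that every point is aperiodic, recurrent, and has language $\LL(X)$, then use \Cref{lem:linearrecurrentabovelanguage} for one inclusion and \Cref{cor:linearrecurrentequallanguage} together with \Cref{lem:recurrentcone} for the other. Nothing is missing.
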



\begin{proof}
	Consider an infinite minimal subshift $X$ with strong linear complexity. For every $x \in X$, $\LL(x) = \LL(X)$ (otherwise, 
	the orbit closure of $x$ would be a nonempty proper subsystem of $X$, contradicting minimality), 
	and $x$ is aperiodic and recurrent (otherwise, the orbit closure of $x$ would contain a point not containing all
	words in $\LL(X)$).
	Therefore, by \Cref{cor:linearrecurrentequallanguage}, there exists a point $y\in X$
  with $\deg_T y = \deg_T\LL(X)$.
	Also, for all $x \in X$, \Cref{lem:linearrecurrentabovelanguage} implies that
	$\deg_T x \geq_T \deg_T \LL(X) = \deg_T \LL(y)$, and so $\spectrum{X} \subset \cone{\deg_T \LL(X)}$.  
	Finally, by \Cref{lem:recurrentcone}, $\spectrum{X} \supset \cone{\deg_T \LL(X)}$, completing the proof.
\end{proof}

\subsection{Implications of exponential complexity}

In the opposite direction, if $X$ has exponentially growing complexity function (i.e. $X$ has positive topological entropy), this also restricts $\spectrum{X}$.

\thmPosEntropy
\begin{proof}
  Suppose that $h^{top}(X) > 0$. Then by \Cref{periodic}, there exists a measure $\mu$ giving zero measure to the set of periodic points of $X$. By \Cref{poincare}, $\mu$-a.e. $x \in X$ is recurrent, and so there exists an aperiodic recurrent point $x \in X$. Then, by Lemma~\ref{lem:recurrentcone}, $\spectrum{X} \supset \cone{x}$. 
\end{proof}


\section{Realizing (Turing) spectrums for various growth rates of complexity functions}\label{S:Real}

In this section, we will present examples of subshifts with various complexity restrictions which exhibit different sets of Turing degrees as spectra.
In several cases, our results come close to showing that the restrictions from Section~\ref{S:restrict} are tight (i.e. cannot be relaxed). The main obstacle 
is that every spectrum we can realize is either a union of cones or contains $\turdegzero$.  

\subsection{Examples with strong linear complexity}

\begin{theorem}\label{finitecones}
  Let $S = \{\turdeg{d_1}, \ldots, \turdeg{d_t}\}$ be a finite set of Turing degrees. Then there exists 
  a subshift $X$ having strong linear complexity whose spectrum is
  \[
    \bigcup_{i = 1}^t \cone{\turdeg{d_i}}\text{.}
  \]

\end{theorem}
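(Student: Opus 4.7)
The natural strategy is to realize the desired spectrum as a finite disjoint union of Sturmian subshifts, one per degree. Recall from Section~\ref{S:Sturmians} that each Sturmian subshift $S_\alpha$ has complexity $c_n(S_\alpha) = n+1$ and spectrum exactly $\cone{\deg_T \alpha}$, so these are precisely the building blocks we need.

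My first step would be to observe that every Turing degree $\turdeg{d}$ is represented by some irrational $\alpha \in (0,1)$. For $\turdeg{d} = \turdegzero$, pick any computable irrational, e.g. $\alpha = 1/\sqrt{2}$. For a nonzero degree, choose any representative $s \in \{0,1\}^{\NN}$; since $s$ is not computable its binary expansion is not eventually periodic, so the real $0.s_0 s_1 s_2 \ldots$ lies in $(0,1)$, is irrational, and has the same Turing degree as $s$. Applying this to each $\turdeg{d_i}$ produces irrationals $\alpha_1, \ldots, \alpha_t \in (0,1)$ with $\deg_T \alpha_i = \turdeg{d_i}$.

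Next I would define
\[
  X := \bigcup_{i=1}^{t} S_{\alpha_i}.
\]
This is a finite union of closed shift-invariant sets, hence a subshift. For complexity, $\LL_n(X) = \bigcup_{i=1}^{t} \LL_n(S_{\alpha_i})$, so
\[
  c_n(X) \;\leq\; \sum_{i=1}^{t} c_n(S_{\alpha_i}) \;=\; \sum_{i=1}^{t}(n+1) \;=\; tn + t,
\]
which gives $\limsup (c_n(X) - tn) \leq t < \infty$, i.e.\ strong linear complexity with index $t$. Since each point of $X$ belongs to some $S_{\alpha_i}$,
\[
  \spectrum{X} \;=\; \bigcup_{i=1}^{t} \spectrum{S_{\alpha_i}} \;=\; \bigcup_{i=1}^{t} \cone{\turdeg{d_i}},
\]
as required.

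There is no real obstacle: the Sturmian subshifts individually hit the target complexity bound $n+1$ and the target cones, and taking a finite union only multiplies $c_n$ by at most $t$ while taking the union of the spectra. The only mildly non-obvious point is arranging an irrational $\alpha_i$ of each prescribed Turing degree, handled in the first step above.
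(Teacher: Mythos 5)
Your proposal is correct and follows essentially the same route as the paper: both take $X$ to be a finite union of Sturmian subshifts $S_{\alpha_i}$ with $\deg_T \alpha_i = \turdeg{d_i}$, then observe that this union has strong linear complexity with index $t$ and spectrum $\bigcup_i \cone{\turdeg{d_i}}$. Your added care in producing an irrational representative of each degree is a fine explication of a step the paper leaves implicit.
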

\begin{proof}
  Given a real $\alpha \in \cantor$, the Sturmian subshift $S_\alpha$
  corresponding to the rotation of the circle by $\alpha$ has complexity function
  $c_n(S_{\alpha})=n+1$ and its spectrum is $\cone{\alpha}$.

  Now take $\alpha_1,\dots,\alpha_t$ sequences of degrees
  $\turdeg{d_1},\dots,\turdeg{d_t}$. Then $X=\bigcup_{i=1}^t X_{\alpha_i}$ is a
  subshift and has strong linear complexity (with index $t$), and its spectrum is exactly $\bigcup_{i=0}^{t} \cone{\turdeg{d_i}}$.
\end{proof}

\begin{theorem}\label{finiteset}
  Let $S=\{\turdeg{d_1}, \dots, \turdeg{d_t}\}$ be a finite set of Turing
  degrees. Then there exists a subshift $X$ with strong linear complexity whose spectrum is
  \[\spectrum{X}= S\cup\{\turdegzero\}\text{.}\]
  Furthermore, if $S$ can be realized by a \pizu class, then $X$ is effective.
\end{theorem}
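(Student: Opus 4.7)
Plan. The approach is to reduce to the case of a single nonzero degree and then take a suitable disjoint union. For each $d_i \in S$, I would construct a subshift $X_i$ over an alphabet $\A_i$ such that $\spectrum{X_i} = \{\turdegzero, d_i\}$ and $X_i$ has strong linear complexity with some index $t_i$; choosing the $\A_i$ pairwise disjoint and setting $X := X_1 \cup \cdots \cup X_t$, the language of $X$ is (up to the common zero word) the disjoint union of the $\LL(X_i)$, so $c_n(X) = \sum_i c_n(X_i) + O(1) = \bigl(\sum_i t_i\bigr) n + O(1)$ is strong linear, and $\spectrum{X} = \bigcup_i \spectrum{X_i} = \{\turdegzero\} \cup S$ as required.

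To produce each $X_i$, fix $\alpha_i \in \cantor$ of degree $d_i$ and construct a single non-recurrent point $x_i$ of degree $d_i$; let $X_i = \overline{\orbit{x_i}}$. The key structural constraint, coming from \Cref{lem:recurrentcone}, is that $X_i$ must contain no aperiodic recurrent point, or else the spectrum would pick up a cone; equivalently, every subsequential limit of $\sigma^k x_i$ as $|k|\to\infty$ must be eventually periodic. A natural candidate is the point in which $x_i(n) = 0$ for $n \leq 0$, and on the positive side $x_i$ is a concatenation $B_0 B_1 B_2 \cdots$ of periodic blocks, where block $B_k$ has period $p_k \in \{2,3\}$ recording $\alpha_i(k)$ and length $L_k$ divisible by $6$ and chosen to grow rapidly. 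The unique transition from the zero background to $B_0$ makes $x_i$ non-recurrent; $\deg_T x_i = d_i$, because $\alpha_i$ and $(p_k)$ are uniformly computable from each other; and every shift-limit is either $0^\ZZ$ or a cyclic shift of a period-$2$ or period-$3$ word, all of degree $\turdegzero$. Hence $\spectrum{X_i} = \{\turdegzero, d_i\}$.

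The main obstacle is verifying \emph{strong} (as opposed to merely linear) complexity. Words of length $n$ in $X_i$ split into: the all-zero word; the finitely many pure periodic words; an $O(n)$-size family of words straddling exactly one block boundary (parameterized by the four possible period-pair combinations and the boundary's position in the window, with the periodic phases forced by divisibility of $L_k$ by $6$); and "multi-boundary" words, which can only come from windows covering short blocks with $L_k < n$. To bound the multi-boundary contribution by a constant independent of $n$, I would take $L_k$ to grow super-exponentially (e.g.\ $L_{k+1}\ge 2^{L_k}$), so that for every $n$ only a uniformly finite initial segment of blocks is "short," and combine this with a careful bookkeeping showing that the multi-boundary words stabilize into a fixed finite collection once $n$ is large.

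Finally, for the effectiveness claim: if $S$ is the set of degrees realized by a $\Pi^0_1$ class $P$, one can choose the $\alpha_i$ uniformly from $P$ by standard tree-traversal techniques (for instance, for each $d_i$ pick a computable-in-$\emptyset$ branch of the tree for $P$ that realizes $d_i$). Because the map $\alpha_i \mapsto x_i$ and the combinatorial description of $\LL(X_i)$ are uniformly effective, the list of forbidden words for $X$ is then recursively enumerable, and $X$ is effective.
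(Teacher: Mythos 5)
Your construction is genuinely different from the paper's. The paper encodes the degree $\turdeg{d_i}$ via a single $\{1,2\}$-sequence $s_i$ whose letters are placed between computably growing gaps $0^{m_k}$ (with $m_k = 2^k - 1$), so all $t$ sequences $x_i$ live on the common three-letter alphabet $\{0,1,2\}$ and share the gap structure; you encode each degree through a block structure where the $k$-th block carries one bit of $\alpha_i$ via its period $p_k \in \{2,3\}$, and take a disjoint union over pairwise disjoint alphabets. Both are built around the same core idea: a non-recurrent point of degree $\turdeg{d_i}$ whose orbit closure adds only degree-$\turdegzero$ limit points, with strong linear complexity coming from a growth condition on the gap/block lengths. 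The paper's encoding is slightly easier to count because the carrier symbols are isolated letters and the words containing more than one of them are confined to a window of size $\Theta(n)$.

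There is, however, a concrete error in your complexity bookkeeping. You claim that by taking $L_k$ to grow super-exponentially only a uniformly finite initial segment of blocks is ``short'' for each $n$, so that the multi-boundary contribution is a constant. Neither part holds: with $L_{k+1} \ge 2^{L_k}$, the number of indices $k$ with $L_k < n$ grows like $\log^* n$, which is unbounded; and the number of starting positions yielding a length-$n$ window that covers at least two block boundaries is of order $n$ (writing $m$ for the largest index with $L_m < n$, these starting positions fill an interval of length roughly $n - L_m$). The multi-boundary words of length $n$ do not stabilize to a fixed finite collection as $n \to \infty$. This does not sink the argument, because strong linear complexity only requires the total number of non-periodic length-$n$ words to be $O(n)$, and that follows from $\sum_{k \le m} L_k = O(L_m) = O(n)$, which already holds with geometric growth of $L_k$ (super-exponential growth is not needed). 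But the accounting should be reorganized to bound all non-periodic words together rather than isolating a purported $O(1)$ multi-boundary piece.

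The effectiveness clause is also off. A ``computable-in-$\emptyset$ branch'' of the tree for $P$ is simply a computable branch and hence has degree $\turdegzero$; if $\turdeg{d_i} \neq \turdegzero$ there is no such branch realizing $\turdeg{d_i}$, and more fundamentally, effectiveness of $X$ cannot be obtained by picking specific non-computable branches of $P$ and then building from them. The way to get effectiveness is not to fix branches $\alpha_i$ at all: define $X$ directly from the class $P$, forbidding exactly those words inconsistent with the template applied to some branch of $P$. Since $P$ is $\Pi^0_1$, those forbidden words are recursively enumerable; the degrees realized are exactly the degrees of elements of $P$ together with $\turdegzero$ from the degenerate limit points, which is what is wanted.
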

\begin{proof}
  Take $s_1,\dots,s_t$ sequences of $\{1,2\}^\NN$ with degrees $\turdeg{d_1},\dots,\turdeg{d_t}$
  respectively. Choose a strictly increasing computable function $(m_i)_{i\in\NN}$.
  We define a subshift $X\subset\{0,1,2\}^\ZZ$ consisting of the closure of the union of 
  the orbits of the following sequences $\{x_i\}_{i = 1}^t$:
  \[
    \begin{array}{c}
      x_{i} := \quad \presuper{\omega}{0} . s_{i}(1) 0^{m_1} s_{i}(2) 0^{m_2}s_{i}(3) 0^{m_3}s_{i}(4) 0^{m_4}s_{i}(5) 0^{m_5}s_{i}(6) 0^{m_6}\cdots 
    \end{array}
  \] 
  Each $x_i$ is just $s_i$ with growing numbers of $0$ interspersed between its letters, and so since $(m_i)$ is computable,
  $\deg_T(x_i) = \deg_T(s_i)$ for $1 \leq i \leq t$. The only points in $X$ other than the orbits of the $x_i$ 
  are those obtained by limits of shifts of various $x_i$. It is not hard to check that the only such sequences are
  in the orbit of either $y_1 = \presuper{\omega}{0}. 1 0^{\omega}$, $y_2 = \presuper{\omega}{0}. 2 0^{\omega}$, or
  $y_3 = \presuper{\omega}{0}. 0^{\omega}$. These all trivially have Turing degree \turdegzero, and so $\spectrum{X} = S\cup\{\turdegzero\}$.
  

  We must now show that $X$ has strong linear complexity by bounding $c_n(X)$ from above.
  Subwords of $y_1$, $y_2$, and $y_3$ have at most one
  symbol not equal to $0$, and there are $2n+1$ $n$-letter words of this type. 
  We now need to find the contribution to $c_n(X)$ from subwords of some $x_i$.
  Fix $i$, and consider any $w$ a subword of $x_i$. 
  \begin{itemize}
  \item if $w$ contains fewer than two symbols from $\{1,2\}$, then
    $w$ has already been counted as a subword of $y_1$, $y_2$, or $y_3$.
  \item if $w$ contains at least two $\{1,2\}$ symbols, then $w$ must occur within \newline
    $0^{n - 1} s^i_1 0^{m_1} s^i_2 0^{m_2} \ldots s^i_k 0^{m_k}$, where 
    $k$ is the unique integer for which $m_{k-1} < n \leq m_{k}$. 
    The number of such $n$-letter subwords of $x_i$ is $\displaystyle \sum_{i=1}^k (1 + m_i)$. 
    If we choose, for instance, $m_i = 2^i - 1$, then this is always strictly less than $4n$. 
  \end{itemize}
  
  So, by combining these counts over all $1 \leq i \leq t$, we see that $c_n(X) < (2n + 1) + 4tn$,
  and therefore $X$ clearly has strong linear complexity.
  If $S$ is a \pizu class, then the set of forbidden words can be
  recursively enumerated, and so $X$ is effectively closed.
  
  
\end{proof}

\thmRealStrongLinear
\begin{proof}
Simply take the union of the subshifts guaranteed by Theorems~\ref{finitecones} and \ref{finiteset}.
\end{proof}

\subsection{Examples with arbitrarily slow superlinear complexity}

\begin{theorem}\label{countablecones}
  Let $S = \{\turdeg{d_0}, \turdeg{d_1}, \ldots\}$ be a countable set of Turing degrees. Then the set of 
  subshifts $X$ with $\spectrum{X} = \bigcup_{k \geq 0} \cone{\turdeg{d_k}}$ exhibits arbitrarily slow superlinear complexity. 
  

\end{theorem}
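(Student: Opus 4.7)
The plan is to mimic \Cref{finitecones} by taking a union of Sturmian-like subshifts (one for each $\turdeg{d_k}$), but with a ``stretching'' modification that keeps the word-complexity controlled, so that at any given length $n$ only a very slowly growing number of components contribute.

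First, I would pick for each $k$ an irrational $\alpha_k\in(0,1)$ with $\deg_T\alpha_k=\turdeg{d_k}$, and --- using the unboundedness of $f$ --- choose an increasing sequence $L_0<L_1<\cdots$ of integers such that $N(n):=|\{k:L_k\le n\}|=o(f(n))$ (e.g., $L_k$ chosen minimal so that $f(L_k)\ge 5k$). For each $k$ I would then define the substitution $\sigma_k:0\mapsto 0^{L_k}$, $1\mapsto 0^{L_k-1}1$ and let $Y_k\subseteq\{0,1\}^\ZZ$ be the orbit closure of $\sigma_k(S_{\alpha_k})$. Each $Y_k$ is a shift-invariant compact set whose points all encode $\alpha_k$ (via the positions of the sparse $1$s, which are uniformly spaced modulo $L_k$) and which contains aperiodic recurrent points coming from $\sigma_k(S_{\alpha_k})$, so by \Cref{lem:recurrentcone} combined with the Sturmian analysis in \Cref{S:Sturmians} one gets $\spectrum{Y_k}=\cone{\turdeg{d_k}}$; a direct count also gives $c_n(Y_k)\le n+3L_k$.

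Next I would set $X=\overline{\bigcup_k Y_k}$ and bound $c_n(X)$ via the splitting $\LL_n(X)=\bigcup_k\LL_n(Y_k)$. For every $k$ with $L_k>n$, each $n$-letter subword of $Y_k$ contains at most one $1$ and so lies in the common set $\{0^n\}\cup\{0^i10^{n-1-i}:0\le i<n\}$ of size $n+1$. For the $N(n)$ indices $k$ with $L_k\le n$, each $Y_k$ contributes at most $4n$ distinct $n$-letter words. Hence $c_n(X)\le(n+1)+4n\cdot N(n)$, which is less than $nf(n)$ for all sufficiently large $n$ by the choice of the $L_k$.

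Finally, to identify $\spectrum{X}$, the inclusion $\bigcup_k\cone{\turdeg{d_k}}\subseteq\spectrum{X}$ is automatic. In the other direction, any point of $X$ either lies in some $Y_k$ --- in which case its degree is in $\cone{\turdeg{d_k}}$ --- or arises as a limit of points in infinitely many distinct $Y_k$s; since $L_k\to\infty$ along such a sequence, every finite window of the limit contains at most one $1$, forcing its degree to be $\turdegzero$. The main technical obstacle is precisely this closure step: the construction as stated gives $\spectrum{X}=\bigcup_k\cone{\turdeg{d_k}}\cup\{\turdegzero\}$, which equals the target exactly when $\turdegzero$ already appears there. In the remaining case I would plan to refine $\sigma_k$ with a $k$-dependent marker word at the start of each block, so that any accumulation point of points drawn from arbitrarily large $Y_k$ is either already in some $Y_k$ or is forbidden entirely, without affecting the complexity or spectrum bookkeeping above; calibrating $L_k$ against $f$ while managing these closure points is the real work of the proof.
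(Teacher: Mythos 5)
Your construction has a genuine gap, which you honestly flag but do not resolve, and the proposed fix cannot work. Because you build the $Y_k$ by substitutions of lengths $L_k\to\infty$, the sets $Y_k$ become uniformly sparser as $k$ grows: for any fixed window size $N$, once $L_k>N$ every $N$-letter word of $Y_k$ has at most one non-zero symbol. Consequently $\overline{\bigcup_k Y_k}$ necessarily contains points such as $0^\omega$ and $\presuper{\omega}{0}.10^\omega$: pick any $y_k\in Y_k$ and pass to a convergent subsequence using compactness of $\{0,1\}^\ZZ$, and the limit is a sparse sequence of degree $\turdegzero$ that lies in no $Y_k$. These limit points are in the closure by definition, so no marker-word scheme can ``forbid them entirely,'' and since they are sparser than any fixed $Y_k$ they cannot land back in some $Y_k$ either. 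Thus your $X$ always satisfies $\spectrum{X}\supseteq\{\turdegzero\}$, and the theorem requires $\spectrum{X}=\bigcup_k\cone{\turdeg{d_k}}$, which omits $\turdegzero$ unless some $\turdeg{d_k}=\turdegzero$. This is exactly the case the theorem is designed to cover (the version allowing $\turdegzero$ is the separate \Cref{countableset}).

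The paper's proof sidesteps this by keeping all the constituent subshifts at the \emph{same} Sturmian scale and instead letting the rotation numbers converge: it chooses $\alpha_k\to\alpha_0$ so rapidly that $\LL_{m_k}(S_{\alpha_k})=\LL_{m_k}(S_{\alpha_0})$, where $(m_i)$ is an increasing sequence calibrated against $f$. Then $X=\bigcup_k S_{\alpha_k}$ is already closed, because any limit of points from $S_{\alpha_{k_n}}$ with $k_n\to\infty$ has all of its finite subwords in $\LL(S_{\alpha_0})$ and hence lies in $S_{\alpha_0}$ — so no new degree is introduced. The same agreement of languages gives the complexity bound: for $m_{k-1}\le n<m_k$ all $S_{\alpha_i}$ with $i\ge k$ contribute the same $n$-words as $S_{\alpha_0}$, so $c_n(X)\le k(n+1)\le nf(n)$. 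Your approach with growing substitution lengths does produce a clean $\spectrum{X}=S\cup\{\turdegzero\}$-type result, but for the cones-only statement you need the paper's device of converging rotation numbers (or something structurally equivalent), not a marker trick.
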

\begin{proof} 
Choose any increasing unbounded $f: \mathbb{N} \rightarrow \mathbb{N}$. Define $(m_i)$ an strictly increasing (not necessarily computable) sequence of natural numbers satisfying $m_{f(n)/2} > n$ for all sufficiently large $n$; this is possible since $f$ is unbounded.

  Choose a sequence $(\alpha_k)_{k \geq 0}$ where $\alpha_k$ has degree $\turdeg{d_k}$ for $k \geq 0$,
	and where	$\alpha_k$ approaches the limit $\alpha_0$ quickly enough to guarantee $\LL_{m_k}(S_{\alpha_k}) = \LL_{m_k}(S_{\alpha_0})$ for 
	$k \in \mathbb{N}$ (recall that $S_\alpha$ is the Sturmian shift with rotation number $\alpha$); this is possible since the set of reals with any fixed Turing degree is obviously dense.
  Define $X = \bigcup_{k = 0}^{\infty} S_{\alpha_k}$; $X$ is closed (and therefore a subshift) since $\alpha_k \rightarrow \alpha_0$. 
  Since $\spectrum{S_{\alpha_k}} = \cone{\turdeg{d_k}}$ for all $n$, $\spectrum{X} = \bigcup_{k \geq 0}\cone{\turdeg{d}_k}$, as desired.

  It remains to bound $c_n(X)$ from above. Choose any $n$, and define $k$ so that $m_{k-1} \leq n < m_k$. 
  Since $m_{f(n)/2} > n$ for sufficiently large $n$, $k \leq f(n)/2$ for all such $n$. 
  For any $i \geq k$, note that since $\LL_{m_i}(S_{\alpha_i}) = \LL_{m_i}(S_{\alpha_0})$, $\LL_n(S_{\alpha_i}) = \LL_n(S_{\alpha_0})$. 
  Therefore, $c_n(X) \leq k(n+1)$, which is bounded from above by $(n+1)(f(n)/2) \leq nf(n)$ for 
  sufficiently large $n$, completing the proof.

\end{proof}

\begin{theorem}\label{countableset}
  Let $S$ be a countable set of Turing degrees. Then the set of 
  subshifts $X$ with $\spectrum{X} = S \cup \{\turdegzero\}$ exhibits arbitrarily slow
  superlinear complexity.    
\end{theorem}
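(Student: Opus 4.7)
My plan is to adapt the construction in the proof of \Cref{finiteset} to a countable union of single-degree encodings, with the initial gap in each encoding calibrated to the given $f$ so that only $O(f(n))$ of them contribute multi-nonzero $n$-subwords of length $n$. Enumerate $S = \{\turdeg{d_0}, \turdeg{d_1}, \ldots\}$, fix $s_k \in \{1,2\}^{\NN}$ of degree $\turdeg{d_k}$, and let $M_k$ be the least integer with $f(M_k) \geq 8(k+1)$, which exists and satisfies $M_k \to \infty$ because $f$ is unbounded. Define
\[
  x_k := \presuper{\omega}{0} . s_k(1)\, 0^{M_k+1}\, s_k(2)\, 0^{2(M_k+1)}\, s_k(3)\, 0^{4(M_k+1)}\, \cdots\, s_k(j)\, 0^{2^{j-1}(M_k+1)}\, \cdots
\]
and set $X = \overline{\bigcup_{k \geq 0} \orbit{x_k}}$. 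Each $x_k$ has $\deg_T(x_k) = \turdeg{d_k}$: $s_k$ is recovered by reading off the non-zero coordinates of $x_k$, and conversely $x_k$ is computable from $s_k$ using the fixed integer $M_k$ (a finite datum that contributes nothing to Turing degree).

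For the spectrum, I would show that any $y \in X \setminus \bigcup_k \orbit{x_k}$ has at most one non-zero coordinate, hence is a shift of $\presuper{\omega}{0} . a\, 0^{\omega}$ for some $a \in \{0,1,2\}$ and has degree $\turdegzero$. Writing $y = \lim_\ell \sigma^{n_\ell} x_{k_\ell}$ and passing to a subsequence, either some $k$ occurs infinitely often (so $y$ is a shift-limit within $\orbit{x_k}$, where the gap sequence $2^{j-1}(M_k+1)$ tends to infinity and forces any fixed window to contain at most one non-zero symbol in the limit), or $k_\ell \to \infty$ (in which case $M_{k_\ell}+1 \to \infty$ bounds from below the minimum gap between consecutive non-zero coordinates of $\sigma^{n_\ell} x_{k_\ell}$, so any fixed $[-N,N]$ eventually contains at most one non-zero symbol). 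Either way $y|_{[-N,N]}$ has at most one non-zero symbol for every $N$, yielding $\spectrum{X} = S \cup \{\turdegzero\}$.

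For the complexity estimate, I would split $\LL_n(X)$ into the $2n+1$ words containing at most one non-zero symbol, plus the multi-nonzero $n$-subwords contributed by each $x_k$. Such subwords appear in $x_k$ only when $n \geq M_k + 3$, and the same geometric-series bound used in the proof of \Cref{finiteset} gives at most $4n + O(\log n)$ of them per contributing $x_k$. By the definition of $M_k$, the condition $M_k \leq n-3$ implies $8(k+1) \leq f(M_k) \leq f(n)$, so fewer than $f(n)/8$ values of $k$ contribute, giving
\[
  c_n(X) \leq (2n+1) + \frac{f(n)}{8}\bigl(4n + O(\log n)\bigr) < n f(n)
\]
for all sufficiently large $n$.

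The main subtlety will be the closure step: ensuring that no Turing degrees beyond $\turdegzero$ sneak in when we close under shifts and limits. The whole argument hinges on the calibration $M_k \to \infty$, which simultaneously controls the complexity (by making most $x_k$ contribute only the trivial one-non-zero $n$-subwords already counted) and the spectrum (by forcing any limit drawn from infinitely many distinct orbits $\orbit{x_k}$ to be sparse enough to have trivial Turing degree).
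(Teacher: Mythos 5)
Your proof is correct; the strategy is the same as the paper's (encode each $s_k$ as a sparse sequence $x_k$, take the orbit closure, and calibrate the sparsity against $f$ so that only $O(f(n))$ of the $x_k$ contribute nontrivial $n$-subwords), but the specific mechanism you use to achieve the calibration is genuinely different. The paper's $x_i$ all share a long \emph{common prefix} of rigid gaps $2^1-1, 2^2-1, \ldots, 2^{m_i}-1$ (so they are literally identical to the reference point $y_1$ out to gap $2^{m_i}-1$), and the bits of $s_i$ are encoded in the gap lengths $2^{m_i+j+s_i(j)}$ that follow; the sequence $m_i$ is tuned so that for a given $n$, only the $x_i$ with $m_i < n$ have escaped the common prefix. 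You instead encode $s_k$ directly in the letters $\{1,2\}$, give $x_k$ a base gap size $M_k+1$ from the very first symbol, and tune $M_k$ so that when $n < M_k+3$ the point $x_k$ has no $n$-window with two non-zero symbols at all. Your route avoids the common-prefix bookkeeping and makes the degree-of-$x_k$ argument a triviality (no need to observe that gap lengths recover $s_i$), at the modest cost of a three-letter rather than two-letter alphabet. The closure analysis is cleanly handled by the dichotomy you give (constant subsequence of $k_\ell$ versus $k_\ell \to \infty$, both forcing at most one non-zero symbol per window), and your complexity count is correct: $M_k \leq n-3$ forces $8(k+1) \leq f(M_k) \leq f(n)$, so fewer than $f(n)/8$ indices contribute, each at most $4n + O(\log n)$ nontrivial words, and the total is below $nf(n)$ for large $n$. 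One small point worth making explicit when you write this up: $M_k$ is non-decreasing in $k$ and tends to infinity because $f$ is non-decreasing and unbounded; this monotonicity and divergence is what the spectrum argument silently relies on in the $k_\ell\to\infty$ case.
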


\begin{proof}
  Let $S = \{\turdeg{d_1}, \turdeg{d_2}, \ldots\}$ be a countable set of Turing degrees, and choose
  $\{0,1\}$-sequences $(s_i)_{i \in \mathbb{N}}$ with $\deg_T(s_i) = \turdeg{d_i}$. Fix any increasing unbounded
  $f: \NN \rightarrow \NN$. Define $(m_i)$ an strictly increasing (not necessarily computable) sequence of natural numbers satisfying
  $m_{(f(n) - 6)/4} > n$ for all sufficiently large $n$; this is possible since $f$ is unbounded.

  Now, for each $i$, define a sequence
  \[
    x_i := \quad \presuper{\omega}{0}. 1 0^{2^1 - 1} 1 0^{2^2 - 1} 1 0^{2^3 - 1} \ldots 1 0^{2^{m_i} - 1} 1 0^{2^{m_i + 1 + s_i(1)}} 
    1 0^{2^{m_i + 2 + s_i(2)}} 1 0^{2^{m_i + 3 + s_i(3)}} \ldots
  \]

  The runs of $0$s in $x_i$ are just consecutive powers of $2$ until the gap of
  length $2^{m_i}$, after which the runs of $0$s switch to increasing powers of
  $2$ which encode the letters of $s_i$. It should be clear from this that
  $\deg_T(x_i) = \deg_T(s_i) = \turdeg{d_i}$. Define a subshift $X$ as the
  closure of the union of the orbits of all $x_i$. All $x \in X$ are then either
  some shift of an $x_i$, or a limit of shifts of various $x_i$. It is easy to
  see that the only such points, apart from the $x_i$ themselves, are in the orbit of either
  
  $y_1 = \presuper{\omega}{0}. 1 0^{2^1 - 1} 1 0^{2^2 - 1} 1 0^{2^3 - 1} \ldots$, $y_2 = \presuper{\omega}{0}. 1 0^{\omega}$, or 
  $y_3 = \presuper{\omega}{0} . 0^{\omega}$. Clearly all of these have Turing degree $\turdegzero$, and so
  $\spectrum{X} = S \cup \{\turdegzero\}$. 

  It remains to bound $c_n(X)$ from above. We first note that for every $n$, the number of
  combined $n$-letter subwords of $y_1$, $y_2$, and $y_3$ is less than or equal to $6n + 1$
  by exactly the same argument as used in Theorem~\ref{finiteset}. Now, choose any $n$, and 
  take the unique $k \in \NN$ for which $m_{k-1} < n \leq m_k$. Recall that for sufficiently large $n$,
  $m_{(f(n) - 6)/4} > n$, and so for such $n$, $k \leq (f(n) - 6)/4$. Then, note that for $i \geq k$, 
  every $n$-letter subword of $x_i$ appears also as a subword of $y_1$. Therefore, we need only 
  count the $n$-letter subwords within $x_i$ for $i < k$. Each $x_i$ is a sequence whose gaps of
  $0$ have lengths which are increasing powers of $2$, and so by a similar argument as in Theorem~\ref{finiteset},
  each $x_i$ has less than or equal to $4k$ subwords of length $n$. Combining the above yields
  \[
    c_n(X) \leq (4k + 6)n,
  \]
  which is less than or equal to $nf(n)$ for large enough $n$, completing the proof.

\end{proof}

\thmRealArbSlow
\begin{proof}
Just take the unions of subshifts guaranteed by Theorems~\ref{finitecones} and \ref{countableset}. 
\end{proof}

\subsection{Examples with arbitrarily computably slow superlinear complexity}


\thmRealArbSlowComp

The proof is composed of the following two lemmas.

\begin{lemma}
 Let $S$ be an arbitrary nonempty closed set of Turing degrees, then the set of subshifts $X$
 with $\spectrum{X}=S\cup\{\turdegzero\}$ exhibits arbitrarily computably slow
 superlinear complexity. 
\end{lemma}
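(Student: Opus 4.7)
The plan is to emulate \Cref{countableset} but replace the countable enumeration $(s_i)$ by a single closed parametrization. Fix a closed $T\subseteq \cantor$ realizing $S$ and a computable increasing unbounded $f$. Each $s\in T$ will be encoded into a point $\phi(s)$ of Turing degree $\deg_T s$, and $X$ will be the orbit closure of $\phi(T)$. The role previously played by the growing ``reference prefix'' in \Cref{countableset} will here be taken by a single synchronizing marker together with an extremely sparse, computably chosen positioning of the encoded bits.

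Concretely, using computability of $f$, build greedily a computable strictly increasing sequence $(M_k)_{k\geq 0}$ with $f(M_k)>2^{k+C}$, where $C$ is a constant to be fixed below. Define
\[
  \phi(s) \;:=\; \presuper{\omega}{0}.\, 2\; 0^{M_0-1}\, s(0)\; 0^{M_1-M_0-1}\, s(1)\; 0^{M_2-M_1-1}\, s(2)\; \cdots
\]
on the alphabet $\{0,1,2\}$, so that the marker lies at position $0$ and $s(k)$ at position $M_k$. Since $(M_k)$ is computable, $\deg_T\phi(s)=\deg_T s$; let $X$ be the orbit closure of $\phi(T)$, which gives $\{\deg_T\phi(s):s\in T\}=S$.

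To show $\spectrum{X}=S\cup\{\turdegzero\}$, write any $x\in X$ as a limit $\lim_n \sigma^{k_n}\phi(s_n)$ and pass to a subsequence, splitting according to whether $k_n$ is eventually constant or $|k_n|\to\infty$. In the first case, closedness of $T$ together with compactness gives $s_n\to s\in T$, so $x=\sigma^{k}\phi(s)$ has degree $\deg_T s\in S$. In the second case, the gaps $M_{k+1}-M_k$ tend to infinity, so any fixed finite window of $\sigma^{k_n}\phi(s_n)$ eventually contains at most one non-zero symbol; hence $x$ is either $\mathbf{0}$ or a shift of $\presuper{\omega}{0}\!.\, 1\, 0^{\omega}$, both of Turing degree $\turdegzero$.

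Finally, for the complexity bound, observe that any length-$n$ subword of $X$ is determined by at most $O(n)$ alignment choices together with the values of the $s$-bits visible in a length-$n$ window; the number of such bits is bounded by $k(n):=\max\{k:M_k\leq n\}$. The inequality $f(M_k)>2^{k+C}$ together with monotonicity of $f$ forces $2^{k(n)}<f(n)/2^{C}$, so $c_n(X)=O(n\cdot 2^{k(n)})<n f(n)$ for all large $n$ once $C$ is chosen to absorb the hidden multiplicative constants. The main obstacle is really the limit analysis above: without both the marker (to keep bounded-shift limits inside $\phi(T)$ and therefore of degree in $S$) and the sparsity of $(M_k)$ (to collapse unbounded-shift limits to computable points), uncountably many of the $\phi(s_n)$ could combine in a limit to produce a point of uncontrolled Turing degree.
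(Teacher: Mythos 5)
Your plan is essentially the paper's proof: encode each $s$ in a closed realization of $S$ into a sparse point whose information-carrying coordinates sit at computably chosen positions, take the orbit closure, show that bounded-shift limits remain in the image of the encoding (hence have degree in $S$) while unbounded-shift limits collapse to degree $\turdegzero$, and bound $c_n(X)$ by (number of ``skeleton'' shapes of length $n$) $\times$ $2^{(\text{number of bit positions visible in a length-}n\text{ window})}$. The paper simply writes the bits in $\{1,2\}$ so that they are distinguishable from the gap $0$s without your synchronizing marker~$2$; that is a cosmetic difference.

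One point does need repair. You assert that ``the gaps $M_{k+1}-M_k$ tend to infinity,'' but the greedy choice of $(M_k)$ subject only to $f(M_k)>2^{k+C}$ does not force this (take $f(n)=2^n$: the minimal $M_k$ then have bounded gaps). Yet your unbounded-shift limit analysis and your $O(n)$ count of skeleton alignments both rest on exactly this sparsity. The fix is cheap: additionally impose, say, $M_{k+1}\geq 2M_k$, which keeps $(M_k)$ computable, makes the gaps tend to infinity, and gives $\sum_{i\leq k} M_i = O(M_k)$ so that the number of distinct length-$n$ windows of the skeleton is indeed $O(n)$. The paper makes this choice implicitly by reusing the geometric-gap count from the proof of \Cref{finiteset}; you should make it explicit.
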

\begin{proof}
  Let $S$ be an arbitrary closed nonempty set of Turing degrees, and choose a
  closed set of $\{1,2\}$-sequences $(s_{\turdeg{d}})_{\turdeg{d} \in S}$ with $\deg_T(s_{\turdeg{d}}) = \turdeg{d}$. 
  Fix any increasing unbounded computable $f: \NN \rightarrow \NN$, and choose a strictly 
  increasing computable function $(m_i)_{i\in\NN}$ so that $m_{(\log_2 f(n))/6} > n$ for all $n$;
  this is possible since $f$ is unbounded and computable.
  We define a subshift $X\subset\{0,1,2\}^\ZZ$ consisting of the closure of the union of 
  the orbits of the following sequences $(x_{\turdeg{d}})_{\turdeg{d} \in S}$:
  \[
    \begin{array}{c}
      x_{\turdeg{d}} := \quad \presuper{\omega}{0} . s_{\turdeg{d}}(1) 0^{m_1} s_{\turdeg{d}}(2) 0^{m_2}s_{\turdeg{d}}(3) 0^{m_3}
      s_{\turdeg{d}}(4) 0^{m_4}\cdots 
    \end{array}
  \] 
  Since $(m_i)$ is computable, it should be clear that $\deg_T(x_{\turdeg{d}}) = \deg_T(s_{\turdeg{d}}) = \turdeg{d}$ for all $\turdeg{d} \in S$. The only    		
  $x \in X$ other than the orbits of the $x_{\turdeg{d}}$ 
  are those obtained by limits of shifts of various $x_{\turdeg{d}}$. 
  Since $\left(s_{\turdeg d}\right)_{\turdeg d\in S}$ is closed, the only such points outside of the
  $x_{\turdeg{d}}$ themselves are
  in the orbit of either $y_1 = \presuper{\omega}{0}. 1 0^{\omega}$, $y_2 = \presuper{\omega}{0}. 2 0^{\omega}$, or
  $y_3 = \presuper{\omega}{0}. 0^{\omega}$. These all trivially have Turing degree \turdegzero, and so $\spectrum{X} = S\cup\{\turdegzero\}$.

  We must now bound $c_n(X)$ from above. 
  We recall from the proof of Theorem~\ref{finiteset} 
  that any particular $x_{\turdeg{d}}$ has complexity less than or equal to $6n$; let's for instance fix $x$ to be the sequence 
  induced by $s_{\turdeg{d}} = 1^{\omega}$. Now, consider any $n$, and define $k$ so that $m_{k-1} < n \leq m_k$. 
  Recall that for sufficiently large $n$, $m_{(\log_2 f(n))/6} > n$, and so for such $n$, $k \leq (\log_2 f(n))/6$. 
  Clearly any $n$-letter word in $\LL(X)$ has 
  less than or equal to $k$ symbols in $\{1,2\}$, and is obtained by taking some subword of $x$ and changing the $1$s to
  a combination of $1$s and $2$s, which can be done in less than or equal to $2^k$ ways. 
  Therefore, $c_n(X) \leq 2^{k} (6n)$, which is less than $nf(n)$ for sufficiently large $n$,
  completing the proof.
  
\end{proof}
\begin{lemma}\label{lem:compslowsublin}
 Let $S$ be an arbitrary closed nonempty set of Turing degrees, then the set of subshifts $X$
 with $\spectrum{X}=\bigcup_{\turdeg{d}\in S}\cone{\turdeg d}$ exhibits arbitrarily computably slow
 superlinear complexity. 
\end{lemma}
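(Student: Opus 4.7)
The plan is to mirror the Sturmian-union construction of \Cref{countablecones}, but to replace the countable family of rotation numbers there by a closed family indexed by a closed representation $C \subseteq \cantor$ of $S$, with the binary spacing of the encoding tuned to $f$.

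Fix a closed $C \subseteq \cantor$ with $\{\deg_T(c) : c \in C\} = S$ and a computable, strictly increasing function $g : \NN \to \NN$ to be pinned down later. A sparse encoding of the form $\alpha_c = \beta + \sum_{k \geq 1} c(k) 2^{-g(k)}$, with $\beta$ a fixed computable irrational of Liouville type whose binary $1$-positions avoid $\{g(k) : k \geq 1\}$, assigns to each $c \in C$ an irrational $\alpha_c \in (0,1)$ with $\deg_T(\alpha_c) = \deg_T(c)$ (one reads $c(k)$ off the binary digit of $\alpha_c - \beta$ at position $g(k)$), and with the prefix-distance estimate
\[
|\alpha_c - \alpha_{c'}| \leq 2^{2 - g(k+1)} \quad \text{whenever $c$ and $c'$ agree on the first $k$ bits.}
\]
The continuous image $A = \{\alpha_c : c \in C\}$ is thus a compact subset of $(0,1)$ consisting of irrationals whose degrees are exactly $S$.

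Define $X = \bigcup_{\alpha \in A} S_\alpha$. I claim $X$ is a subshift: if $x_n \to x$ with $x_n \in S_{\alpha_n}$ and $\alpha_n \in A$, I pass to a subsequence so that $\alpha_n \to \alpha \in A$; since $\alpha$ is irrational, the Sturmian $k$-language $\LL_k(S_{\alpha'})$ is locally constant in $\alpha'$ near $\alpha$ for each fixed $k$, so every finite subword of $x$ lies in $\LL_k(S_\alpha)$, whence $x \in S_\alpha \subseteq X$. The spectrum computation is then immediate:
\[
\spectrum{X} \;=\; \bigcup_{\alpha \in A} \spectrum{S_\alpha} \;=\; \bigcup_{c \in C} \cone{\deg_T c} \;=\; \bigcup_{\turdeg{d} \in S} \cone{\turdeg{d}}.
\]

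The only real content is the complexity bound. I rely on the classical fact that $\LL_n(S_\alpha)$ is determined by the combinatorial order of $\{-\alpha, 0, \alpha, 2\alpha, \ldots, n\alpha\}$ mod $1$, which is constant on each interval bounded by consecutive elements of the Farey sequence $F_{n+1}$; since those fractions differ by at least $1/(n+1)^2$, any subinterval of $[0,1]$ of length less than $1/(n+1)^2$ contributes at most two distinct $n$-languages. Combining this with the prefix-distance estimate, $A$ is covered by $2^k$ intervals of length $\leq 2^{2-g(k+1)}$; choosing $k(n) := \max\!\bigl(0,\lfloor \log_2 f(n) \rfloor - 2\bigr)$ and demanding $g(k(n)+1) > 2\log_2(n+1) + 2$ for all large $n$ yields at most $2 \cdot 2^{k(n)} \leq f(n)/2$ distinct $n$-languages, hence $c_n(X) \leq \tfrac{f(n)}{2}(n+1) < nf(n)$ for large $n$. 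Such a $g$ can be chosen computably since $f$ is: for each $m$, the largest $n$ with $k(n) \leq m$ is computable from $f$, so one sets $g(m+1)$ to exceed $2\log_2(n+1)+2$ for that $n$. The main obstacle is the simultaneous bookkeeping — making the encoding computable and degree-preserving, guaranteeing irrationality of every $\alpha_c$ so the upper-semicontinuity argument for closedness goes through, and arranging $g$ to grow fast enough against an arbitrary computable unbounded $f$; the Farey-gap estimate for Sturmian languages is standard (e.g.\ via the three-distance theorem).
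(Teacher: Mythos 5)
Your route is genuinely different from the paper's. The paper builds $X$ from an explicit recursive word construction due to Miller: words $a_\sigma, b_\sigma$ are defined for $\sigma \in \{0,1\}^*$ with $|a_\sigma| \approx \prod_{i < |\sigma|} g(i)$, and $X$ consists of bi-infinite concatenations of $a_\sigma, b_\sigma$ along branches of a closed representative $C$ of $S$; the complexity bound comes directly from counting subwords of bounded concatenations of $a_\sigma$'s and $b_\sigma$'s. You instead form a union $X = \bigcup_{c \in C} S_{\alpha_c}$ of Sturmian subshifts over a compact set $A$ of irrationals encoding $C$, and you control $c_n(X)$ through the Mignosi/Farey fact that the Sturmian $n$-language is locally constant in $\alpha$ with discontinuities only at rationals of denominator $O(n)$. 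Both constructions hinge on a computable sparsity function $g$ tuned to $f$ and on recovering the branch $c$ computably from a point of $X$; your Sturmian route mirrors the paper's own \Cref{countablecones} and has minimal fibers, while the Miller construction is more self-contained and makes the complexity bookkeeping combinatorially direct.

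There is one real gap, which you flag but which is not merely bookkeeping: guaranteeing that every $\alpha_c$ is irrational. The condition ``$\beta$ a Liouville irrational whose binary $1$-positions avoid $\{g(k)\}$'' is not sufficient, and Liouville-ness is in fact beside the point. As a counterexample, take $g(k) = 2^k$ and $\beta = \sum_{n \notin \{g(k)\}} 2^{-n}$: this $\beta$ is a computable Liouville irrational whose $1$-positions avoid $\{g(k)\}$, yet for the all-ones branch $c$ one gets $\alpha_c = \beta + \sum_k 2^{-g(k)} = 1 \in \mathbb{Q}$. What you actually need is that the $1$-position set of $\alpha_c$ fails to be eventually periodic for \emph{every} $c \in C$, i.e.\ is immune to adjoining any subset of $\{g(k)\}$. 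One clean repair: take $\beta = \sum_k 2^{-(g(k)+1)}$ and impose $g(k+1) - g(k) \to \infty$ (which is compatible with all your other constraints on $g$). Then every $\alpha_c$ has a $1$ at each position $g(k)+1$, forcing unbounded gaps in its binary expansion and hence non-eventual-periodicity, while the disjointness of supports preserves the no-carry property your decoding relies on. With this repair (and the minor $F_n$ versus $F_{n+1}$ adjustment, which only shifts constants) the argument is sound and gives a valid alternative proof.
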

\begin{proof} This proof is based on a construction of \citet{miller}.
  Let $S$ be an arbitrary non-empty closed set of Turing degrees and $f:\NN\to\NN$ be any
  increasing computable unbounded function. Choose $g:\NN\to\NN$ a computable
  function satisfying $g(n) > 2$ and $2^{2n + 5} \leq f(g(n))$ for all $n$. Take
  $\left(s_{\turdeg d}\right)_{\turdeg d\in S}$ to be a closed set of $\{0,1\}$ sequences
  with $s_{\turdeg{d}}$ of degree $\turdeg{d}$.
  

  Now, for each finite word $\sigma\in\{0,1\}^*$, we inductively define two words $a_\sigma$ and
  $b_\sigma$ as follows: (here $\epsilon$ represents the empty word)
  
  \begin{minipage}{.5\linewidth}
    \begin{align*}
      a_\epsilon ={} & 1 \\
      a_{\sigma 0} ={} & b_{\sigma}\underbrace{a_{\sigma}\cdots a_{\sigma}}_{g(|\sigma|)-1} \\
      a_{\sigma 1} ={} & a_{\sigma}\underbrace{b_{\sigma}\cdots b_{\sigma}}_{g(|\sigma|)-1}
    \end{align*} 
  \end{minipage}
  \begin{minipage}{.5\linewidth}
    \begin{align*}
      b_\epsilon ={} & 0 \\
      b_{\sigma 0} ={} & a_{\sigma}\underbrace{b_{\sigma}\cdots b_{\sigma}}_{g(|\sigma|)} \\
      b_{\sigma 1} ={} & b_{\sigma}\underbrace{a_{\sigma}\cdots a_{\sigma}}_{g(|\sigma|)}
    \end{align*} 
  \end{minipage}
  Note that $a_\sigma$ is always a prefix of $b_\sigma$, and so for every $\sigma$, all subwords of
  $a_\sigma$ appear in $b_\sigma$.
  Now let $\mathcal{F}_S$ be the set of words that are not subwords of any word of the form $b_\sigma$
  for $\sigma$ a prefix of some $s_{\turdeg{d}}$. Now let $X = X(\mathcal{F}_S) \subset\{0,1\}^\NN$ be the
  subshift defined by the set $\mathcal{F}_S$ of forbidden words. 

  We first show that $\spectrum X$ is $\bigcup_{\turdeg d\in S}\cone{\turdeg
    d}$. By definition of $\mathcal{F}_S$, for every $x\in X$ and $n\in\NN$, 
    $x$ is a concatenation of $a_{\sigma}$ and $b_{\sigma}$ for some $\sigma$
    of length $n$, and $\sigma$ is the prefix of some $s_{\turdeg{d}}$. By the recursive
    definition of the words $a_{\sigma}$ and $b_{\sigma}$ and the fact that 
    $\left(s_{\turdeg d}\right)_{\turdeg d\in S}$ is closed, any fixed $x$ is induced
    in this way by prefixes of a single $s_{\turdeg{d}}$. Furthermore, this $s_{\turdeg{d}}$ 
    can be computed from $x$ by just counting the number of successive occurences of 
    $a_\sigma$ and $b_\sigma$. Start with $\sigma=\epsilon$. Then, if the sequence 
    contains two consecutive $a_\epsilon$, then the first bit of $s_{\turdeg{d}}$ is 
    $0$, and otherwise it is $1$. From this information, we can
    deduce subsequent bits of $s_{\turdeg{d}}$ in a similar fashion. 
    Therefore, each sequence of $X$ has degree above some degree of $S$, and so
    $\spectrum X \subseteq \bigcup_{\turdeg d\in S}\cone{\turdeg{d}}$.
    
    For the opposite containment, we fix any $s_{\turdeg{d}}$, and define $\sigma_n$
    to be its prefix of length $n$ for every $n \in \NN$. The sequence $b_{\sigma_n}$ 
    approaches a limit $x_{\turdeg{d}}$, which is in $X$ by definition. Clearly
    $\deg_T(x_{\turdeg{d}}) = \turdeg{d}$, and $x_{\turdeg{d}}$ is recurrent.
    Therefore, by~\Cref{lem:recurrentcone}, $\spectrum X$ contains $\cone{\turdeg{d}}$.
    Since $\turdeg{d} \in S$ was arbitrary, the reverse containment is proved, and so
    $\spectrum X=\bigcup_{\turdeg d\in S}\cone{\turdeg d}$. 

It remains to bound $c_n(X)$ from above. We first note that for every $\sigma$
of length $k$, every word $a_{\sigma}$, $b_{\sigma}$ has length between $\prod_{i=0}^{k-1} g(i)$ and 
$\prod_{i=0}^{k-1} (g(i) + 1) \leq 2^k \prod_{i=0}^{k-1} g(i)$. For convenience, 
we denote $h(k) = \prod_{i=0}^{k-1} g(i)$ for all $k$. Now, for any $n$, there 
exist $k$ and $1 \leq j < g(k)$ so that $jh(k) \leq n < (j+1) h(k)$. 
For every point of $x$, there exists $\sigma$ of length $k$ so that $x$ 
is a concatenation of $a_{\sigma}$ and $b_{\sigma}$, and there are either
at least $g(k) - 1$ copies of $a_{\sigma}$ between any two $b_{\sigma}$ or vice versa.
Therefore, every word in $\LL_{n}(X)$ is either a subword of 
$(a_{\sigma})^{j+1} b_{\sigma} (a_{\sigma})^{j}$ or 
$(b_{\sigma})^{j+1} a_{\sigma} (b_{\sigma})^{j}$ (recall that $j < g(k)$,
$n < (j+1) h(k)$, and all $a_{\sigma}$, $b_{\sigma}$ have length at least $h(k)$.)
For fixed $\sigma$, the number of such words is clearly bounded from above by 
the sum of the lengths of these words, which is less than or equal to $(4j+4) 2^k h(k)$. 
Since there are $2^k$ possible $\sigma$, we see that $c_n(X) \leq (4j+4) 2^{2k} h(k)
\leq 2^{2k + 3} n \leq f(g(k-1)) n$ and $f(g(k-1))n\leq n f(n)$ since $g(k-1) \leq h(k)\leq n$, completing the proof.

\end{proof}
\subsection{Examples with arbitrarily computably fast subexponential complexity}

\thmRealArbFastComp
\begin{proof}
  Let $S$ be an arbitrary nonempty closed set of Turing degrees, and choose a
  closed set of $\{2,3\}$-sequences $(s_{\turdeg{d}})_{\turdeg{d} \in S}$ with
  $\deg_T(s_{\turdeg{d}}) = \turdeg{d}$.
  Fix any increasing unbounded computable $f: \NN \rightarrow \NN$, and choose a strictly increasing computable function $(m_i)_{i\in\NN}$ so that 
  $m_{f(n)} > n$ for sufficiently large $n$; this is possible since $f$ is unbounded and computable. For each $i$, 
  define $S_i \subset \{0,1\}^{m_i}$ to be the set of all $\{0,1\}$ words of length $m_i$ which 
  do not contain $1$s separated by a distance of less than $i$. (For instance, if $m_2 = 3$, then 
  $S_2 = \{000, 001, 010, 100, 101\}$.) For each $i$, define $w_i$ to be the concatenation of
  all words in $S_i$ in lexicographic order, followed by a second copy of $0^{m_i}$. 
  (In the above example, $w_2$ would be $000001010100101000$.) Since $(m_i)$ is computable,
  the sequence $(w_i)$ is also computable.

  We define a subshift $X\subset\{0,1,2,3\}^\ZZ$ consisting of the closure of the union of 
  the orbits of the following sequences $(x_{\turdeg{d}})_{\turdeg{d} \in S}$:
  \[
    \begin{array}{c}
      x_{\turdeg{d}} := \quad \presuper{\omega}{0} . s_{\turdeg{d}}(1) w_1 s_{\turdeg{d}}(2) w_2 s_{\turdeg{d}}(3) w_3 s_{\turdeg{d}}(4) w_4 \cdots 
    \end{array}
  \] 
  Since $(w_i)$ is computable, $\deg_T(x_{\turdeg{d}}) = \deg_T(s_{\turdeg{d}})
  = \turdeg{d}$ for all $\turdeg{d} \in S$. The only $x \in X$ other than the orbits of the $x_{\turdeg{d}}$ 
  are those obtained by limits of shifts of various $x_{\turdeg{d}}$. The only
  such points are either shifts of some $x_{\turdeg{d}}$ by closedness of $\left(s_{\turdeg d}\right)_{\turdeg d\in S}$ or
  in the orbit of either $y_1 = \presuper{\omega}{0}. 1 0^{\omega}$, $y_2 = \presuper{\omega}{0}. 2 0^{\omega}$, or
  $y_3 = \presuper{\omega}{0}.3 0^{\omega}$. These all trivially have Turing degree $\turdegzero$, and so $\spectrum{X} = S\cup\{\turdegzero\}$.

If $h(X)$ were positive, then by \Cref{periodic}, there would exist an ergodic $\mu$ with the property that $\mu$-a.e. $x \in X$ is aperiodic. 
However, the letters $1$, $2$, and $3$ have frequency zero in every point of $X$, and so applying \Cref{birkhoff} with $w = 0$ shows that
$\mu([0]) = 1$. By $\sigma$-invariance of $\mu$, we see that $\mu(\{0^{\omega}\}) = 1$, i.e. $\mu$-a.e. $x \in X$ is periodic, a contradiction. 
Therefore, our original assumption was false and $h(X) = 0$.

  Finally, we must bound $c_n(X)$ from below. For any $n$, choose $k$ such that $n_{k-1} < n \leq n_k$.
  Recall that for sufficiently large $n$, 
  $m_{f(n)} > n$, and so for such $n$, $k \leq f(n)$.
  Every word of length $n$ which does not contain $1$s separated by distance less than $k$ is a subword of
  $w_k$, and so is in $\LL(X)$. There are at least $2^{\lceil n/k \rceil}$ such words; 
  for instance, any word which contains $0$ at all indexes except multiples of $k$ has this property.
  Therefore, 
  \[
    c_n(X) \geq 2^{\lceil n/k \rceil} \Longrightarrow \frac{\log c_n(X)}{n} \geq \frac{1}{k},
  \]
  which is greater than $\frac{1}{f(n)}$ for all sufficiently large $n$, completing the proof.

\end{proof}

\subsection{Examples with arbitrarily fast subexponential complexity}

\thmRealArbFast

\begin{proof}
Consider any such set $S$, a subshift $Y$ (with alphabet $\A$) as in \Cref{lem:compslowsublin} with $\spectrum{Y} = \bigcup_{\turdeg{d}\in S}\cone{\turdeg d}$ and subexponential complexity, and any increasing unbounded $f: \NN \rightarrow \NN$. Choose a sequence $(n_k)$ so that $n_{k+1} < kf(k)$ for every $k$, and for which the sequence $(n_{k+1} - n_k)$ approaches infinity; this is possible since $f(k)$ is increasing and unbounded (for instance, we could define $n_k = \lfloor kf(k) \rfloor$).

Now, define a subshift $Z$ with alphabet $\{0,1\}$ as follows. Define the set $C$ of $\{0,1\}$-sequences which are $0$ at all locations not indexed by any $n_k$ 
(and which can be $0$ or $1$ at all $(n_k)$-indexed locations). Then, take $Z = \overline{\bigcup_{c \in C} \mathcal{O}(c)}$, and note that trivially the sequence $0^{\omega}$ is in $Z$. The same argument as was used in the proof of \Cref{thm:fastcompsubexp} shows that $h(Z) = 0$, i.e. that $Z$ has subexponential complexity.

Now, just define $X = Y \times Z$, the subshift on $\A \times \{0,1\}$ of sequences projecting to a point of $Y$ along the first coordinate and to a point of $Z$ along the second. For every $y \in Y$, the point $y \times \{0^{\omega}\}$ is in $X$, and clearly has the same degree as $y$. Therefore, 
$\bigcup_{\turdeg{d}\in S}\cone{\turdeg d} = \spectrum{Y} \subseteq \spectrum{X}$. For the reverse containment, note that all points of $X$ project to points of 
$Y$ along the first coordinate, and so for every $x \in X$, there exists $y \in Y$ s.t. $\deg_T x \geq_T \deg_T y$. Since $\deg_T y \in \spectrum{Y} = \bigcup_{\turdeg{d}\in S}\cone{\turdeg d}$, by definition of cone, $\deg_T x \in \bigcup_{\turdeg{d}\in S}\cone{\turdeg d}$ as well. Since $x \in X$ was arbitrary, the reverse containment is shown, and so $\spectrum{X} = \bigcup_{\turdeg{d}\in S}\cone{\turdeg d}$. 

It remains only to bound the complexity of $X$. By definition, $\LL_n(X) =
\LL_n(Y) \times \LL_n(Z)$, and since both $Y$ and $Z$ have subexponential
complexity, this means that $X$ does as well. For a lower bound, choose any $n
\geq n_1$, and choose $k$ s.t. $n_k \leq n < n_{k+1}$. Then every $w \in
\{0,1\}^n$ which has $0$ at all locations not equal to some $n_1, \ldots, n_k$
is in $\LL_n(Z)$, and so $c_n(Z) \geq 2^k$. Clearly then $c_n(X) \geq 2^k$ as well, and so
\[
\frac{\log c_n(X)}{n} \geq \frac{\log 2^k}{n} \geq \frac{k}{n_{k+1}} > \frac{1}{f(n_k)} \geq \frac{1}{f(n)}.
\]
Since $f$ was arbitrary, this completes the proof.

\end{proof}

\subsection{Examples with intermediate complexity}

We begin by showing that subshifts can get close to any ``intermediate'' computable complexity function with any Turing spectrum containing $\turdegzero$. 

\thmRealIntermediate
We need the following very simple lemma. 

\begin{lemma}\label{lem:wordcount}
For every alphabet $\A$, $n \in \mathbb{N}$, and $w \in \A^*$, denote by $N$ the number of $n$-letter subwords of $w$. Then, for every $1 \leq k \leq N$, there exists a prefix $p$ of $w$ containing exactly $k$ different subwords of length $n$. 
\end{lemma}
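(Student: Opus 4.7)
The plan is to use a discrete intermediate value argument on prefixes of $w$ ordered by length. For each prefix $p$ of $w$ of length at least $n$, let $f(p)$ denote the number of distinct $n$-letter subwords of $p$. First I would observe that the prefix of length exactly $n$ satisfies $f(p) = 1$, while the prefix of length $|w|$ (which is $w$ itself) satisfies $f(p) = N$.

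Next I would show that as the prefix is extended one letter at a time, $f$ increases by at most $1$. Concretely, if $p'$ is obtained from $p$ by appending a single letter, then the set of $n$-letter subwords of $p'$ equals the set of $n$-letter subwords of $p$ together with the unique new $n$-letter suffix of $p'$ (assuming $|p| \geq n$). Hence $f(p') - f(p) \in \{0,1\}$.

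Combining these two observations, as we walk through the prefixes of $w$ from length $n$ to length $|w|$, the function $f$ starts at $1$, ends at $N$, and changes by $0$ or $1$ at each step. Therefore every integer value $k$ with $1 \leq k \leq N$ is attained by $f(p)$ for some prefix $p$, which is exactly the desired conclusion. There is no real obstacle here; the only thing to be careful about is the base case (the length-$n$ prefix truly contributes a single subword) and the treatment of the initial prefixes of length less than $n$, which are simply skipped since they contain no $n$-letter subword at all.
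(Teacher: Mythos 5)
Your proposal is correct and follows the paper's own argument essentially verbatim: both define a counting function on prefixes starting at the length-$n$ prefix, note it begins at $1$, ends at $N$, and increases by at most $1$ per appended letter, and then invoke the discrete intermediate value argument. No difference to report.
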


\begin{proof}
If we define $f: [n, |w|] \rightarrow \mathbb{N}$ by taking $f(i)$ to be the number of different $n$-letter subwords of $w(1) \ldots w(i)$, the following facts are clear:
$f(n) = 1$, $f(|w|) = N$, and $f(i + 1) - f(i) \in \{0,1\}$ for all $i$. It follows immediately that for every $1 \leq k \leq N$, there exists $m$ so that 
$f(m) = k$; taking $p = w(1) \ldots w(m)$ then satisfies the conditions of the lemma.
\end{proof}

\begin{proof}[Proof of \Cref{thm:intermediate}]
	
	Let $S$ be an arbitrary nonempty closed set of Turing degrees, and choose a
  closed set $(s_{\turdeg{d}})_{\turdeg{d} \in S}$ of $\{4,5\}$-sequences with $\deg_T(s_{\turdeg{d}}) = \turdeg{d}$. 
	Then, inductively defining a computable sequence $(n_k)_{k \in \NN}$ so that
	$\frac{\log g(n_{k+1})}{n_{k+1}} < \frac{1}{n_k}$ and $n_k | n_{k+1}$ for all $k$ (here we use the fact that $g$ is computable and subexponential). 
	By passing to a subsequence if necessary, we may also assume that $n_k > 2^k \sum_{i=1}^{k-1} n_i (2 + 2^i)$ for all $k$.
	Since $g$ is subexponential, we may also assume that $g(n_k) < 2^{n_k}$ for all $k$.
	Define $w_1$ by concatenating all words in $\{1,2\}^{n_1}$ in lexicographic order; then, since $g(n_1) < 2^{n_1}$, we can use \Cref{lem:wordcount} to construct 
	$p_1 \in \{1,2\}^*$ containing exactly $g(n_1)$ words of length $n_1$.
	For $k \geq 1$, define $w_{k+1}$ by concatenating all words (again in lexicographic order) of the form $0^{n_k - 1} \ell_1 0^{n_k - 1} \ell_2 \ldots 0^{n_k - 1} \ell_{n_{k+1}/n_k}$ for 
	$\ell_i \in \{1,2\}$; clearly then $w_{k+1}$ contains at least $2^{n_{k+1}/n_k}$ words of length $n_{k+1}$. By definition, $g(n_{k+1}) < 2^{n_{k+1}/n_k}$, and so we 
	can use \Cref{lem:wordcount} to construct a prefix $p_{k+1}$ of $w_{k+1}$ which contains exactly $g(n_{k+1})$ words of length $n_{k+1}$. Note that since $g$ and $(n_k)$ were chosen computable, and since the procedure from the proof of \Cref{lem:wordcount} is clearly algorithmic, the sequence $(p_k)$ is computable as well. 
	For future reference, we note that $|w_k| = n_k 2^{n_k/n_{k-1}} <  n_k 2^{n_k}$ for all $k$, and so $|p_k| < n_k 2^{n_k}$ as well.
	Then, for $\turdeg{d} \in S$, define
	\[
	\begin{array}{c}
      x_{\turdeg{d}} := \quad \presuper{\omega}{0} . p_1 (0^{n_1} s_{\turdeg{d}}(1) 0^{n_1-1}) p_2 (0^{n_2} s_{\turdeg{d}}(2) 0^{n_2-1}) p_3 (0^{n_3} 				   	s_{\turdeg{d}}(3) 0^{n_3-1}) p_4 \cdots 
  \end{array}
  \]
	
	We define the subshift $X\subset\{0,1,2,3,4\}^\ZZ$ to be the closure of the union of 
  the orbits of $(x_{\turdeg{d}})_{\turdeg{d} \in S}$. Since the sequence
  $(p_k)$ is computable, $\deg_T(x_{\turdeg{d}}) = \deg_T(s_{\turdeg{d}}) =
  \turdeg{d}$ for all $\turdeg{d} \in S$. The only $x \in X$ other than the  
	orbits of the $x_{\turdeg{d}}$ 
  are those obtained by limits of shifts of various $x_{\turdeg{d}}$. Since $\left(s_{\turdeg d}\right)_{\turdeg d\in S}$ is closed, the only such points begin with $\presuper{\omega}{0}$ and end with $0^{\omega}$, and so 
	all trivially have Turing degree \turdegzero. Therefore, $\spectrum{X} = S\cup\{\turdegzero\}$. It remains only to prove the desired estimates on $c_{n_k}(X)$.
	
	Clearly, $c_{n_k}(X) \geq g(n_k)$ for every $k$, since $p_k$ contains $g(n_k)$ subwords of length $n_k$. For an upper bound, we consider an arbitrary 
	$n_k$-letter subword $w$ of some $x^{\turdeg{d}}$, and break into several cases. If the final letter of $w$ is to the right of the $0^{n_k}$ immediately 
	following $p_k$, then $w$ has at most one non-$0$ symbol; there are at most $1 + 4n_k$ options for such words. If the final letter of $w$ is inside the $0^{n_k}$ 
	immediately
	following $p_k$, then the location of that final letter determines $w$, and so there are at most $n_k$ options for $w$ in that case. If the initial letter of $w$  is to the left of $p_k$, then as long as $w$ is not the word $0^{n_k}$ (which was already counted in a previous case), it is determined by the location of its rightmost letter and the values of
	$s^{\turdeg{d}_i}$ for $1 \leq i \leq k$, giving not more than $2^k \sum_{i=1}^{k-1} |p_i| + 2n_i < 2^k \sum_{i=1}^{k-1} n_i(2 + 2^{n_i})$ possibilities. Combining all of this yields
	\[
	c_{n_k}(X) \leq g(n_k) + 5n_k + 1 + 2^k \sum_{i=1}^{k-1} n_i (2 + 2^{n_i}) \leq g(n_k) + 6n_k.
	\]
	So, $g(n_k) \leq c_{n_k}(X) < g(n_k) + 6n_k$, which implies that $\lim_{k \rightarrow \infty} \frac{c_{n_k}(X)}{g(n_k)} = 1$ since $g$ was assumed to be
	superlinear.

\end{proof}

We now prove that every union of cones can be realized as a Turing spectrum for a slightly smaller range of intermediate complexities, and begin by proving that cones can be realized by such subshifts.

\begin{lemma}\label{lem:intermediate}
Let $\turdeg{d}$ be any Turing degree, and let $g: \mathbb{N}
  \rightarrow \mathbb{N}$ be any computable function
	which is superquadratic (i.e. $\frac{g(n)}{n^2} \rightarrow \infty$) and subexponential (i.e. $\frac{\log g(n)}{n} \rightarrow 0$). 
Then there exists a subshift $X$ with $\spectrum{X} = \cone{\turdeg d}$ and a sequence $(n_k)$ so that 
\[
\lim_{k \rightarrow \infty} \frac{c_{n_k}(X)}{g(n_k)} = 1.
	\]
\end{lemma}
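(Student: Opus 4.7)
The plan is to reduce to \Cref{thm:intermediate} via a direct product construction, using a Sturmian subshift to promote a $\{\turdegzero\}$-spectrum to the cone $\cone{\turdeg d}$ at the modest cost of an extra factor of $n$ in complexity. Specifically, let $\alpha$ be any real of Turing degree $\turdeg d$, so that the Sturmian subshift $S_\alpha$ satisfies $\spectrum{S_\alpha} = \cone{\turdeg d}$ and $c_n(S_\alpha) = n + 1$ by the facts recalled in \Cref{S:Sturmians}. Separately, apply \Cref{thm:intermediate} with $S = \{\turdegzero\}$ and the computable function $\tilde g(n) := \max\{1, \lfloor g(n)/n \rfloor\}$. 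The superquadratic hypothesis on $g$ is exactly what makes $\tilde g$ superlinear, and subexponentiality of $g$ passes immediately to $\tilde g$, so the hypotheses of \Cref{thm:intermediate} are met. This yields a subshift $Z$ with $\spectrum{Z} = \{\turdegzero\}$ and a sequence $(n_k)$ along which $c_{n_k}(Z)/\tilde g(n_k) \to 1$. I would then define $X = S_\alpha \times Z$, the direct product subshift on the alphabet $\{0,1\} \times \A_Z$.

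For the spectrum: any $(x,z) \in X$ satisfies $\deg_T(x,z) = \deg_T(x)$ since $\deg_T(z) = \turdegzero$ (because $\spectrum{Z} = \{\turdegzero\}$), and as $x$ ranges over $S_\alpha$ while $z$ ranges over the nonempty $Z$, the resulting set of degrees is exactly $\spectrum{S_\alpha} = \cone{\turdeg d}$. Hence $\spectrum{X} = \cone{\turdeg d}$.

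For the complexity, a direct product satisfies $c_n(X) = c_n(S_\alpha)\, c_n(Z) = (n+1)\, c_n(Z)$. Along $(n_k)$,
\[
\frac{c_{n_k}(X)}{g(n_k)} = \frac{n_k+1}{n_k} \cdot \frac{n_k \, \tilde g(n_k)}{g(n_k)} \cdot \frac{c_{n_k}(Z)}{\tilde g(n_k)}.
\]
The first factor tends to $1$, the third tends to $1$ by the conclusion of \Cref{thm:intermediate} applied to $\tilde g$, and the second tends to $1$ because $|g(n_k) - n_k \tilde g(n_k)| \leq n_k$ while $g(n_k)/n_k \to \infty$ by superquadraticity. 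Combining these gives the desired limit. I do not anticipate any serious obstacle in carrying out this argument; the main conceptual point is simply to recognize that the superquadratic hypothesis is precisely what permits dividing by $n$ to land in the superlinear regime of \Cref{thm:intermediate}, and that the $(n+1)$ factor contributed by the Sturmian component restores the lost factor of $n$ in complexity while simultaneously lifting the spectrum from $\{\turdegzero\}$ up to $\cone{\turdeg d}$.
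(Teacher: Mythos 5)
Your proposal is correct and essentially identical to the paper's own proof: both reduce to \Cref{thm:intermediate} with $S=\{\turdegzero\}$ and a function $\approx g(n)/n$ (the paper uses $h(n)=\lfloor g(n)/n\rfloor$, you use $\max\{1,\lfloor g(n)/n\rfloor\}$), take the product with a Sturmian subshift $S_\alpha$ of base degree $\turdeg d$, and observe that the $(n+1)$ Sturmian factor restores the missing $n$. Your complexity bookkeeping is slightly more explicit than the paper's about the floor error, and your spectrum argument (every $z\in Z$ is computable, so $\deg_T(x,z)=\deg_T(x)$) is a marginally more direct version of the paper's projection-plus-computable-point argument, but there is no substantive difference.
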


\begin{proof}
Consider any such $\turdeg{d}$ and $g$. Choose any $\alpha$ for which $\deg_T \alpha = \turdeg{d}$, and define the Sturmian subshift $S_\alpha$; recall that $\spectrum{S_{\alpha}} = \cone{\turdeg{d}}$. 

Since $g$ is superquadratic, computable, and subexponential, the function $h(n) := \lfloor g(n)/n \rfloor$ is superlinear, computable, and subexponential. Therefore, by \Cref{thm:intermediate}, there exists a subshift $Y$ with $\spectrum{Y} = \{\turdegzero\}$ and a sequence $n_k$ for which $\lim_{k \rightarrow \infty} \frac{c_{n_k}(Y)}{h(n_k)} = 1$. Now, simply define $X = Y \times S_{\alpha}$. Clearly $\LL_{n_k}(X) = \LL_{n_k}(Y) \times \LL_{n_k}(S_{\alpha})$, and so
$c_{n_k}(X) = c_{n_k}(Y) c_{n_k}(S_{\alpha}) = (n_k + 1) c_{n_k}(Y)$. However, this clearly implies that
\[
\lim_{k \rightarrow \infty} \frac{c_{n_k}(X)}{g(n_k)} = 
\lim_{k \rightarrow \infty} \frac{c_{n_k}(X)/n_k}{g(n_k)/n_k} = 
\lim_{k \rightarrow \infty} \frac{c_{n_k}(Y)}{h(n_k)} = 1.
\]
Finally, we note that every point of $X$ projects to a point of $S_{\alpha}$ along the second coordinate, and so since $\spectrum{S_{\alpha}} = \cone{\turdeg{d}}$, $\spectrum{X} \subseteq \cone{\turdeg{d}}$. For the reverse containment, note that $Y$ contains a computable point $y$, and so for every $s \in S_{\alpha}$, $x = y \times s \in X$, and $\deg_T x = \deg_T s$. Therefore, $\spectrum{X} = \cone{\turdeg{d}}$, completing the proof.

\end{proof}

\thmRealIntermediat


\begin{proof}
Consider any such $S$, $g$, and an arbitrary $\turdeg{d} \in S$. By \Cref{lem:intermediate}, there exists a subshift $Y$ with 
$\lim_{k \rightarrow \infty} \frac{c_{n_k}(Y)}{g(n_k)} = 1$ and $\spectrum{Y} = \cone{\turdeg{d}}$. Since 
$g$ is computable and superlinear, by \Cref{lem:compslowsublin} there exists a subshift $Z$ with 
$\lim_{n \rightarrow \infty} \frac{c_{n}(Z)}{g(n)} = 0$ and $\spectrum{Z} = \bigcup_{\turdeg{d}\in S}\cone{\turdeg d}$. 
Then, we simply define $X = Y \cup Z$. It's immediate that $\spectrum{X} = \spectrum{Y} \cup \spectrum{Z} = \bigcup_{\turdeg{d}\in S}\cone{\turdeg d}$,
and
\[
\lim_{k \rightarrow \infty} \frac{c_{n_k}(X)}{g(n_k)} = 
\lim_{k \rightarrow \infty} \frac{c_{n_k}(Y)}{g(n_k)} +
\lim_{k \rightarrow \infty} \frac{c_{n_k}(Z)}{g(n_k)} = 1.
\]

\end{proof}

\subsection{Examples with exponential complexity/positive entropy}

Though it is still not known exactly which sets of Turing degrees can be the spectrum of a subshift, we can show that there are no restrictions beyond those of Theorem~\ref{posent} on the spectra of positive entropy subshifts.

\thmRealPosEntropy
\begin{proof}
  Consider $X$ as in the theorem, and $h'>h$ with $\deg_T h' = \turdeg d$ and
  suppose that $\cone{\turdeg{d}} \subset   \spectrum{X}$. There exists an
  integer $k$ such that $h'/k\in (0,1)$, choose
  $\alpha=h'/k$, clearly $\deg_T(\alpha) = \turdeg{d}$. Define the Sturmian
  subshift $S_{\alpha}$ with rotation number $\alpha$, and recall
  that $\spectrum{S_{\alpha}} = \cone{\deg_T \alpha} = \cone{\turdeg{d}}$.
  Define $Y \subset S_{\alpha} \times \{a,b\}^{\mathbb{Z}}$ to be the subshift with
  alphabet $\{(0, a), (1, a), (1, b)\}$ of sequences which project to some 
  point of $S$ along the first coordinate (the lack of $(0, b)$ in the alphabet
	imposes the additional constraint that only $a$ can appear with $0$).

  Let us compute the entropy of $Y$ by bounding $c_n(Y)$. In every word of
  length $n$ appearing in $Y$, the number of $1$s is either $\lfloor n\alpha \rfloor$ or
  $\lfloor n\alpha \rfloor+1$, and for each $1$ there are two choices for the
  second component and for each $0$ only one choice. Therefore,
  \[
    (n+1) 2^{\lfloor n\alpha \rfloor}\leq c_n(Y)\leq (n+1) 2^{\lfloor n\alpha \rfloor+1}\text{, and}
	\]
  \[ 
    \frac{\log\left( (n+1) 2^{\lfloor n\alpha \rfloor}\right)}{n}\leq \frac
		{\log c_n(Y)}{n}\leq \frac{\log\left((n+1) 2^{\lfloor n\alpha\rfloor+1}\right)}{n}\text{,}
  \]
  and by taking logs, dividing by $n$, and letting $n \rightarrow \infty$, we see that $h(Y)=\alpha$.

  We now claim that $\spectrum{Y} = \cone{\turdeg{d}}$. Clearly, for all $y \in Y$,
  one can compute a point $s \in S_{\alpha}$ by projecting to the first
  coordinate,  and so $\deg_T y \geq_T \deg_T s \geq_T \turdeg{d}$. Therefore,
  $\spectrum{Y} \subseteq \cone{\turdeg{d}}$.
  For the opposite containment, take
  any  $\turdeg{d'} \geq \turdeg{d}$. We can choose $z \in \{a,b\}^{\ZZ}$ with $\deg_T(z) = \turdeg{d'}$ and
  $s \in S_{\alpha}$ with $\deg_T(s) = \turdeg{d'}$, 		   since
  $\spectrum{S_{\alpha}} = \cone{\turdeg{d}}$. Define $y \in Y$ as the point
  with $s$ occupying the first coordinate, and $z$ occupying the locations in 
	the second coordinate where $s$ has $1$s. 
  Then $\deg_T(y) = \sup(\deg_T(s), \deg_T(z)) =
  \turdeg{d'}$. Therefore, $\cone{\turdeg{d}} \subseteq \spectrum{Y}$, and so
  indeed $\spectrum{Y} = \cone{\turdeg{d}}$.

  Now, we consider the $k$-fold Cartesian product $Y^k$ of $Y$, 
  i.e. the subshift consisting of ordered $k$-tuples of points of $Y$
	(viewed as a sequence over the alphabet $\{(0, a), (1, a), (1, b)\}^k$.)
	Clearly $h(Y^k)=k\alpha =h'$. Since points of $Y^k$ come from $k$-tuples of points of $Y$,
	it's easy to see that 
	\[
	\spectrum{Y^k} = \{\sup(\deg_T(y_1), \ldots \deg_T(y_k)) \ : \ y_1, \ldots, y_k \in Y\}.
	\]
	Since $\spectrum{Y} = \cone{\turdeg{d}}$, we see that $\spectrum{Y^k}=\cone{\turdeg{d}}$ as well. 
	Finally, define $P = X \sqcup Y^k$. By the preceding, $\spectrum{P} =
  \spectrum{X} \cup \spectrum{Y} = \spectrum{X} \cup \cone{\turdeg{d}} =
  \spectrum{X}$. Finally, since $P$ is a disjoint union of $X$ and $Y^k$, its
  entropy is $h(P)=\max \left( h(X),h(Y^k) \right)=h'$.

\end{proof}

\section{An example of spectrum with an isolated degree}\label{S:weirdSpec}

All our previous realizations of spectra are of either unions of cones or an arbitrary set
of Turing degrees containing \turdegzero. We now construct an example of
subshift whose spectrum is not a union of cones and does not contain \turdegzero.

\thmWeirdSpec
\begin{proof}
  Let $\alpha\in(0,1)$ be of degree $\turdeg{d_1}$,
  by Schoenfield's limit lemma (see for example \citet[Proposition
  10.5.10]{Cooper}), there exists a computable sequence of rationals
  $(r_n)_{n\in\NN}$ such that $r_n\to \alpha$. Let $f(n)$ be a strictly
  increasing function of degree $\turdeg{d_2}$.

  Let us now consider the following
  sequence:
  \[
    \begin{array}{c}
    z:=\quad \cdots 2 w_6 2 w_4 2 w_2 2 w_1 2 w_3 2 w_5 2\cdots
    \end{array}
  \]
  where each $w_i\in\{0,1\}^{f(i)}$ and where $w_i(k)=\left\lfloor
    (k+1)r_i\right\rfloor-\left\lfloor k r_i
  \right\rfloor$ for $k\in\{1,\dots,f(i)\}$. We claim that the degree of $z$ is
  $\turdeg{d_2}$: since $(r_n)_{n\in\NN}$ is computable, using $\turdeg{d_2}$ as an
  oracle, it is easy to compute $z$, while $\turdeg{d_2}$ can be computed from $z$
  just by computing the distances between subsequent 2s.

  Now let $X$ be the orbit closure of $z$. We claim that
  $X=S_{\alpha}\cup\left\{ \sigma^i(z)\mid i\in\ZZ \right\}$.

  It is clear that
  each word $w$ of $\LL\left( S_\alpha \right)$ is in $X$: since $r_n\to\alpha$,
  there exists some $k$ such that $w$ is a subword of $w_k$. So
  $S_\alpha\subseteq X$.

  Conversely take a word $w$ appearing in some $w_k$, $w$ appears in a sequence
  of $X$ which is not a translate of $z$ iff there is an infinite increasing sequence
  $(s_i)_{i\in\NN}$ such that $w$ is a subword of $w_{s_i}$. This is true iff
  $w\in\LL(S_\alpha)$. Therefore $X\subseteq S_\alpha\cup\{\sigma^i(z)\mid i\in\ZZ\}$. 

  Now it is clear that $\spectrum{X}=\cone{\turdeg{d_1}}\cup\{\turdeg{d_2}\}$.
\end{proof}
Note that the proof can be generalized to use the assumption that
$\turdeg{d_1} \leq \turdeg a'$ and $\turdeg{d_2} \geq \turdeg a$ for any degree 
$\turdeg a$; the only change is that the sequence $(r_n)$ will be computable
using $a$, a sequence of degree $\turdeg a$, as an
oracle rather than computable. 

\printbibliography
\end{document}